\documentclass[a4paper,USenglish,cleveref,autoref,thm-restate]{lipics-v2021}

\usepackage{amsmath,amsfonts,amssymb,amsthm}
\usepackage{mathbbol}
\usepackage{graphicx,color}
\usepackage{boxedminipage}
\usepackage[linesnumbered, boxed, algosection,noline,noend]{algorithm2e}
\usepackage{framed}
\usepackage{thmtools}
\usepackage{xspace}
\usepackage{todonotes}
\usepackage{pgfplots}
\usepackage{xifthen}
\usepackage{tabularx}
\usepackage{capt-of}
\usepackage{subcaption}
\usepackage{booktabs}
\usepackage{calc}
\usepackage{xargs}

\newcommandx{\ktodo}[2][1=]{\todo[textcolor=red,linecolor=red,backgroundcolor=red!25,bordercolor=red,#1]{K: #2}}

\usetikzlibrary{calc}
\usetikzlibrary{shapes}
\usetikzlibrary{arrows.meta}

\pgfplotsset{compat=1.18}

\DeclareMathOperator{\operatorClassETH}{{\sf ETH}\xspace}
\newcommand{\classETH}{\ensuremath{\operatorClassETH}\xspace}

\DeclareMathOperator{\operatorClassNP}{{\sf NP}}
\newcommand{\classNP}{\ensuremath{\operatorClassNP}}

 \DeclareMathOperator{\poly}{poly}



\newlength{\RoundedBoxWidth}
\newsavebox{\GrayRoundedBox}
\newenvironment{GrayBox}[1]%
   {\setlength{\RoundedBoxWidth}{.93\textwidth}
    \def\boxheading{#1}
    \begin{lrbox}{\GrayRoundedBox}
       \begin{minipage}{\RoundedBoxWidth}}%
   {   \end{minipage}
    \end{lrbox}
    \begin{center}
    \begin{tikzpicture}%
       \node(Text)[draw=black!20,fill=white,rounded corners,%
             inner sep=2ex,text width=\RoundedBoxWidth]%
             {\usebox{\GrayRoundedBox}};
        \coordinate(x) at (current bounding box.north west);
        \node [draw=white,rectangle,inner sep=3pt,anchor=north west,fill=white] 
        at ($(x)+(6pt,.75em)$) {\boxheading};
    \end{tikzpicture}
    \end{center}}     

\newenvironment{defproblemx}[2][]{\noindent\ignorespaces%
                                \FrameSep=6pt%
                                \parindent=0pt%
                \vspace*{-1.5em}
                \ifthenelse{\isempty{#1}}{%
                  \begin{GrayBox}{\textsc{#2}}%
                }{%
                  \begin{GrayBox}{\textsc{#2} parameterized by~{#1}}%
                }
                \begin{tabular*}{\textwidth}{@{\hspace{.1em}} >{\itshape} p{1.8cm} p{0.8\textwidth} @{}}%
            }{
                \end{tabular*}%
                \end{GrayBox}%
                \ignorespacesafterend
            }

\newcommand{\defproblema}[3]{
  \begin{defproblemx}{#1}
    Input:  & #2 \\
    Task: & #3
  \end{defproblemx}
}

\newcommand{\pname}{\textsc}
\newcommand{\ProblemFormat}[1]{\pname{#1}}
\newcommand{\ProblemIndex}[1]{\index{problem!\ProblemFormat{#1}}}
\newcommand{\ProblemName}[1]{\ProblemFormat{#1}\ProblemIndex{#1}{}\xspace}

\newcommand{\probDiameter}{\ProblemName{$k$-Diameter}}

\newcommand{\probColoring}{\ProblemName{$k$-Coloring}}
\newcommand{\probThreeColoring}{\ProblemName{$3$-Coloring}}
\newcommand{\probgridembedded}{\ProblemName{Grid Embedded SAT}}
\newcommand{\probcliqueparition}{\ProblemName{Clique Cover}}
\newcommand{\probThreeCP}{\ProblemName{$3$-Clique Cover}}

\pdfoutput=1 
\hideLIPIcs  


\bibliographystyle{plainurl}

\title{Subexponential Algorithms for Clique Cover on Unit Disk and Unit Ball Graphs}


\author{Tomohiro Koana}{Utrecht University, the Netherlands}{tomohiro.koana@gmail.com}{https://orcid.org/0000-0002-8684-0611}{Supported by the European Research Council (ERC) under the European Union’s Horizon 2020 research and innovation programme (project CRACKNP under grant agreement No. 853234).}
\author{Nidhi Purohit}{National University of Singapore, Singapore}{nidhipurohit95@gmail.com}{https://orcid.org/0000-0003-4869-0031}{}
\author{Kirill Simonov}{Hasso Plattner Institute, University of Potsdam, Germany}{kirillsimonov@gmail.com}{https://orcid.org/0000-0001-9436-7310}{}

\authorrunning{T. Koana, N. Purohit and K. Simonov} 

\Copyright{Tomohiro Koana, Nidhi Purohit and Kirill Simonov} 

\ccsdesc[500]{Theory of computation~Computational geometry} 

\keywords{Clique cover, diameter clustering, subexponential algorithms, unit disk graphs} 

\category{} 

\relatedversion{} 




\nolinenumbers 


\begin{document}

\maketitle

\begin{abstract}
    In \probcliqueparition, given a graph $G$ and an integer $k$, the task is to partition the vertices of $G$ into $k$ cliques.
    \probcliqueparition on unit ball graphs has a natural interpretation as a clustering problem, where the objective function is the maximum diameter of a cluster.

    Many classical \classNP-hard problems are known to admit $2^{O(n^{1 - 1/d})}$-time algorithms on unit ball graphs in $\mathbb{R}^d$ [de Berg et al., SIAM J. Comp 2018]. A notable exception is the \textsc{Maximum Clique} problem, which admits a polynomial-time algorithm on unit disk graphs and a subexponential algorithm on unit ball graphs in $\mathbb{R}^3$, but no subexponential algorithm on unit ball graphs in dimensions $4$ or larger, assuming the ETH [Bonamy et al., JACM 2021].

    In this work, we show that \probcliqueparition also suffers from a ``curse of dimensionality'', albeit in a significantly different way compared to \textsc{Maximum Clique}. We present a $2^{O(\sqrt{n})}$-time algorithm for unit disk graphs and argue that it is tight under the ETH. On the other hand, we show that \probcliqueparition does not admit a $2^{o(n)}$-time algorithm on unit ball graphs in dimension $5$, unless the ETH fails.
\end{abstract}

\newpage 

\section{Introduction} \label{sec:intro}

Clustering is a general method of partitioning data entries, normally represented by points in the Euclidean space, into clusters with the goal of minimizing a certain similarity function for the points in the same cluster. Many popular similarity objectives such as $k$-means and $k$-center are center-based, i.e., the objective function of the cluster is defined in terms of distance to the additionally selected center of the cluster. On the other hand, arguably the most natural similarity measure that is defined solely in terms of distances between the given datapoints, is the maximum diameter of a cluster. That is, the objective function of the clustering is the maximum distance between any pair of points in the same cluster. Formally, we consider the following 
\probDiameter problem: Given a set of points $P$ in the Euclidean space $\mathbb{R}^d$, and parameters $k$, $D$, is there a partitioning of $P$ into disjoint $C_1$, \ldots, $C_k$, such that for each $j \in [k]$, and each $x, y \in C_j$, $||x - y|| \le D$?\footnote{Since one can binary search over the value of $D$, and there are at most $|P|^2$ different distances between the pairs of points, this decision version of the problem is equivalent to the optimization version, up to logarithmic factors in the running time.}

The \probDiameter problem admits a natural geometric interpretation. Consider a set of disks with centers in $P$ and of the same radius $D / 2$. The problem asks to partition the disks into $k$ sets so that disks in each set pairwise intersect.
Given a graph $G$ and an integer $k$, let \probcliqueparition be the problem of partitioning the vertex set of $G$ into $k$ vertex-disjoint cliques.
\probDiameter in $\mathbb{R}^d$ is thus equivalent to \probcliqueparition on unit ball graphs in~$\mathbb{R}^d$. Note that \probcliqueparition is equivalent to \textsc{$k$-Coloring} on general graphs by taking the complement of the graph; however, unit ball graphs are not closed under complements, therefore \probcliqueparition on unit ball graphs does not necessarily have the same complexity as \textsc{$k$-Coloring} on unit ball graphs.


The main question we ask in this work is the following: Does \probDiameter in $\mathbb{R}^d$, or equivalently \probcliqueparition on $d$-dimensional unit ball graphs, admit subexponential-time algorithms?
Given that \probcliqueparition is a natural graph problem akin to \textsc{Maximum Clique} and \textsc{$k$-Coloring}, our question fits into the recent line of advances for algorithms on geometric intersection graphs.

In a seminal work, de Berg et al.~\cite{BergBKMZ20} gave a framework for $2^{O(n^{1 - 1/d})}$-time algorithms on, in particular, $d$-dimensional unit ball graphs, which covers problems such as \textsc{Maximum Independent Set}, \textsc{Dominating Set}, and \textsc{Steiner Tree}. At the heart of the framework lies a special kind of tree decomposition, that essentially guarantees that each bag is covered by $O(n^{1 - 1/d} / \log n)$ cliques. The target problem is then solved via dynamic programming over the decomposition, given that the interaction of the solution with the cliques in the bag could be succinctly represented. For example, in the \textsc{Maximum Independent Set} problem the solution can have at most one element per clique, and storing the intersection between the solution and the bag is therefore sufficient for the running time above.

However, \probcliqueparition stands aside from the problems covered by the framework of de Berg et al., as the interaction between the smallest clique cover and the given clique cover of the bag does not immediately seem to admit a succinct representation.
Moreover, one can easily observe that finding the smallest clique cover is still \classNP-hard even if a clique cover of the graph of constant size is given. Indeed, it is famously \classNP-hard to determine whether a 4-colorable graph admits a 3-coloring~\cite{KhannaLS00, GuruswamiK04}, and colorings turn into clique covers under taking the complement of the graph.

Previously in the literature, another problem shown to not exhibit such a ``gradually subexponential'' behavior was the \textsc{Maximum Clique} problem. Already since the 1990s, a polynomial-time algorithm for \textsc{Maximum Clique} on unit disk graphs was known~\cite{ClarkCJ90}. Recently, Bonamy et al.~\cite{BonamyBBCGKRST21} have shown that \textsc{Maximum Clique} only admits a subexponential-time algorithm on $3$-dimensional unit ball graphs, while no $2^{o(n)}$-time algorithm is possible in dimension 4, assuming the \classETH.

\subparagraph*{Our results.} As the first step, we show a subexponential algorithm for \probcliqueparition on unit disk graphs.
Our starting point is the weighted treewidth approach of de Berg et al.~\cite{BergBKMZ20}; however, as per the discussion above, on its own this characterization does not seem to be sufficient. Intuitively, the geometric structure of unit disk graphs has to play a role not only in the decomposition itself, but also in representing the solution with respect to the decomposition.
In order to accommodate this, we build upon the classical lemma due to Capoyleas, Rote and Woeginger~\cite{CapoyleasRW91}, that was rediscovered several times in the literature~\cite{DumitrescuP11, PirwaniS12}.
Simply put, there always exists an optimal clique cover where all cliques are well-separated, i.e., the convex hulls of the respective disk centers do not intersect. As only constantly many cliques may lie in direct vicinity of another clique in an optimal solution, we can show that there are at most polynomially many possible configurations for each clique in the optimal solution.
This characterization, coupled with the dynamic programming approach, results in the following theorem.

\begin{restatable}{theorem}{subexpalgo}
    \label{theorem:subexpalgo}
    \probcliqueparition can be solved in time $2^{O(\sqrt n)}$ on $n$-vertex unit disk graphs, when a geometric representation of the graph is given in the input, with bit-length of the vectors bounded by $\poly(n)$.
\end{restatable}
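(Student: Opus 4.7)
The plan is to combine three ingredients: the weighted tree decomposition of de Berg et al.\ specialized to $d=2$ (yielding, for any $n$-vertex unit disk graph, a tree decomposition in which every bag is covered by at most $t = O(\sqrt{n}/\log n)$ cliques of the graph), the structural lemma of Capoyleas, Rote and Woeginger (which guarantees an optimal clique cover $\mathcal{C}^\star$ whose member cliques have pairwise disjoint convex hulls), and a dynamic program over the decomposition whose state at each bag encodes how this convex-hull-disjoint optimum interacts with the bag's covering cliques.

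First I would compute such a decomposition and put it in nice form with leaf, introduce, forget, and join nodes. Fix a bag $B$ and let $K_1, \dots, K_t$ be its covering cliques. A unit-disk packing argument, leveraging the well-separation of $\mathcal{C}^\star$, shows that each single $K_i$ can meet only $O(1)$ cliques of $\mathcal{C}^\star$: any such solution clique must contribute a disk whose center lies in the bounded region spanned by the centers of $K_i$, and the pairwise-disjoint convex hulls of unit-disk cliques crowded around a single point fit in only constantly many angular sectors.

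Second, for each $K_i$ I would enumerate all $n^{O(1)}$ local configurations: a choice of at most $O(1)$ candidate solution cliques touching $K_i$, the partition of $V(K_i)$ they induce, and, for each of them, a succinct combinatorial-geometric descriptor (for instance, the $O(1)$ extremal vertices of its convex hull inside and outside the bag, together with witnessing separating lines against neighbouring solution cliques) rich enough to certify membership and to check consistency with other bag cliques and with adjacent bags of the decomposition. Choosing a configuration independently per covering clique yields at most $n^{O(t)} = 2^{O(\sqrt{n})}$ DP states per bag; the introduce, forget, and join transitions amount to polynomial-time geometric consistency checks (using $\poly(n)$-length arithmetic on the input vectors, as guaranteed by the hypothesis on the representation), so the full algorithm runs in $2^{O(\sqrt{n})}$.

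The main obstacle, and where I expect the technical work to lie, is pinning down the configuration descriptor to be both enumerable in $\poly(n)$ per bag clique and sufficient to enforce correctness across the whole decomposition. In particular, when a vertex is forgotten or when two children are joined, the DP must decide whether partial configurations in different bags describe the same global clique of $\mathcal{C}^\star$ and whether extending by a new vertex still respects the disjoint-convex-hulls condition; the Capoyleas--Rote--Woeginger structure is precisely what keeps the configuration space polynomial, but coordinating the geometric descriptors across bags so that clique identities match up is the delicate point.
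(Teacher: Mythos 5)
Your proposal follows essentially the same route as the paper: the weighted tree decomposition of de Berg et al.\ with a logarithmic weight function, the Capoyleas--Rote--Woeginger separation theorem, the packing argument showing each bag clique meets only $O(1)$ solution cliques, and a polynomial-size family of candidate intersections per bag clique obtained from separating lines through pairs of input points. The state count $n^{O(t)} = 2^{O(\sqrt n)}$ per bag matches the paper's bound, which is derived slightly more directly from the weighted treewidth as $\exp(O(\sum_{P}\log|P|))$.

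The point you flag as ``delicate'' --- deciding whether partial configurations in different bags describe the same global solution clique --- is, however, exactly the step your sketch leaves unresolved, and it is where the paper's dynamic program does its real work. The paper does \emph{not} attempt to match clique identities across bags via geometric descriptors; any such scheme risks needing information about hull vertices living far outside the current bag. Instead, the configuration of a partition class $P$ records only the candidate pieces $C_i\cap P$ together with a single Boolean flag per piece indicating whether that piece has already been absorbed into a completed solution clique. Cliques are then assembled \emph{lazily at forget nodes}: when $P$ leaves the bag, the algorithm builds an auxiliary graph on the still-uncovered pieces of $P$ and the pieces of the remaining classes that flip from uncovered to covered, joins two pieces when their union is a clique of $G$, and solves a clique-cover instance on this $O(\sqrt n)$-vertex graph by an inner $2^{O(\sqrt n)}$ subset DP, charging the newly formed cliques to the counter $\ell$. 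This removes any need to certify that descriptors in sibling or ancestor bags refer to the same clique; consistency at join nodes reduces to requiring identical partitions of each $P$ and OR-ing the covered flags. If you want to complete your argument, you should either adopt this finalize-at-forget mechanism or prove that your cross-bag descriptor matching can be done with only polynomially many descriptors per clique --- the latter is not established by your sketch and is not obviously true.
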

Note that recognizing unit disk graphs is, in general, \classNP-hard~\cite{BreuK98} and even $\exists\mathbb{R}$-complete~\cite{KangM12}, which means that one cannot expect to be able to compute a geometric representation of a given unit disk graph efficiently.

Using the lower bound machinery of de Berg et al.~\cite{BergBKMZ20}, we also observe that the running time above is tight.
Moreover, the lower bound holds for higher dimensions as well.

\begin{restatable}{theorem}{subexplb}
    Assuming the \classETH, \probcliqueparition on $n$-vertex unit ball graphs in $\mathbb{R}^d$ does not admit a $2^{o(n^{1 - 1/d})}$-time algorithm, for any $d > 1$, even if the geometric representation of polynomial bit-length is given in the input.
    \label{thm:eth_Rd}
\end{restatable}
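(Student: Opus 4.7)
The plan is to apply the grid-embedding framework of de Berg et al.~\cite{BergBKMZ20} to establish ETH lower bounds for geometric problems on unit ball graphs in $\mathbb{R}^d$. Starting from an ETH-hard 3-SAT instance $\phi$ with $N$ variables, the goal is to produce an equivalent unit ball graph instance of \probcliqueparition on $n = O(N^{d/(d-1)})$ vertices. Since $n^{1-1/d} = \Theta(N)$, a $2^{o(n^{1-1/d})}$-time algorithm would then yield a $2^{o(N)}$-time algorithm for 3-SAT, contradicting ETH.

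For the planar case $d=2$, I would first reduce $\phi$ to a planar 3-SAT instance and then embed it into an $O(N) \times O(N)$ integer grid using standard planarization techniques, where variables and clauses occupy constant-size regions and are connected by axis-aligned wires. This grid embedding is then converted into a unit disk graph by placing three types of gadgets on it: variable gadgets that provide two candidate vertices per variable whose clique assignment encodes a truth value; wire gadgets consisting of short chains of unit disks along the wire path that force consistent propagation of the truth value from one end to the other; and clause gadgets that admit a valid clique cover within the target budget only when at least one of the three incoming literals arrives in the ``true'' state. For higher dimensions $d \ge 3$, I would embed $\phi$ into a $d$-dimensional grid of side length $O(N^{1/(d-1)})$, allowing wires to route through the additional dimensions while keeping the total volume at $O(N^{d/(d-1)})$. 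The same gadget construction translates into a unit ball graph in $\mathbb{R}^d$ with $n = O(N^{d/(d-1)})$ balls, and the bit-length of the geometric representation stays polynomial since all coordinates are placed at scaled integer grid points.

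The main obstacle is designing the gadgets so that the Clique Cover constraints tightly encode the 3-SAT semantics in a unit ball geometry. Unlike Independent Set, Clique Cover is a partition problem where every vertex must be covered, so each gadget must be calibrated to consume an exact number of cliques in any optimal solution. The clause gadget in particular must ``waste'' a clique unless satisfied, which requires the geometric proximity structure of unit balls to precisely match the desired combinatorial constraint, and analogously the wire gadget must have exactly two minimum clique covers corresponding to the two truth values. Once the gadgets are in place, a satisfying assignment of $\phi$ corresponds to a clique cover of the target size, and the ETH lower bound $2^{\Omega(N)} = 2^{\Omega(n^{1-1/d})}$ follows by routine counting of vertices.
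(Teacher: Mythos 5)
Your high-level plan coincides with the paper's: reduce from a grid-embedded SAT instance, realize variables, wires and clauses as local unit-disk gadgets on the grid, and for $d\ge 3$ route the wires through a $d$-dimensional grid of side $O(N^{1/(d-1)})$ (this routing step is precisely the cube wiring theorem of de Berg et al., which the paper invokes as a black box), so the parameter accounting $n=O(N^{d/(d-1)})$, $n^{1-1/d}=\Theta(N)$ is correct. One caution on the $d=2$ setup: if you literally ``reduce to planar 3-SAT and then embed,'' the planarized formula has $\Theta(N^2)$ variables and the grid side is only $O(N)$ because the crossover gadgets sit at the crossing points of a drawing of the \emph{original} $O(N)$-size incidence graph; getting this order of operations right is exactly what the \probgridembedded problem of de Berg et al.\ packages up, and the paper simply starts from that problem.

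The genuine gap is that the entire gadget construction --- which is the actual content of the proof --- is deferred as ``the main obstacle'' and never carried out, and it is not routine. The paper resolves it as follows: normalize the formula so each variable occurs twice positively and once negatively; the variable gadget is a paw ($K_3$ glued to $K_2$ at a hub $v_x$, with connection vertices $u_x,u_x'$ on the triangle side and $w_x$ on the pendant side); each wire is a path with an \emph{even} number $2\ell$ of edges from a connection vertex to the clause vertex (parity is enforced by a 2-refinement of the grid drawing, and without this parity control the propagation argument breaks); the clause gadget is a single vertex. The budget is set tightly to $k=n+L/2$, where $L$ is the total wire length: any clique cover must spend one clique per variable gadget and $\ell$ cliques per wire of length $2\ell$ (a clique meets a geometric path in at most two consecutive vertices), so the only freedom left is which endpoint of each wire is absorbed by the wire's cliques, and this is exactly what transmits the truth value and forces each clause vertex to be covered by a wire carrying a true literal. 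Your proposal correctly states the properties the gadgets must have (two minimum covers per wire, a ``wasted'' clique at an unsatisfied clause), but stating the specification is not the same as exhibiting gadgets that meet it \emph{and} admit a unit-disk realization with degree-3 clause vertices and paw-shaped variable gadgets on grid coordinates of polynomial bit-length; the paper needs an explicit case analysis of disk placements for the latter. As written, the proposal is a correct roadmap to the paper's argument rather than a proof.
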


The next natural question is whether the algorithmic result of Theorem~\ref{theorem:subexpalgo} could also be extended to higher dimensions. Unfortunately, the separation property that plays the key role in Theorem~\ref{theorem:subexpalgo} only holds in the two-dimensional case: the original work of Capoyleas, Rote and Woeginger already observes that the analogous statement in three dimensions admits a counterexample~\cite{CapoyleasRW91}. This, however does not exclude other potential ways for a succinct representation of the solution, or another completely unrelated approach. We show that the separation property is indeed crucial, that is, \probcliqueparition does not admit subexponential algorithms on unit ball graphs in constant dimension.

\begin{restatable}{theorem}{explb}
    Assuming the \classETH, \probcliqueparition on $n$-vertex unit ball graphs in $\mathbb{R}^5$ does not admit a $2^{o(n)}$-time algorithm, even if the geometric representation of polynomial bit-length is given in the input.
    \label{thm:eth_R5}
\end{restatable}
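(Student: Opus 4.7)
The plan is to reduce from \pname{3-SAT} on a formula $\phi$ with $n$ variables and $m = O(n)$ clauses, which by the sparsification lemma admits no $2^{o(n)}$-time algorithm under the \classETH. I would construct, in polynomial time, a set $P \subseteq \mathbb{R}^5$ of $N = O(n)$ points with coordinates of polynomial bit-length together with an integer $k = O(n)$, such that the unit ball graph on $P$ admits a clique cover of size at most $k$ if and only if $\phi$ is satisfiable. Since $N = \Theta(n)$, this immediately rules out $2^{o(N)}$-time algorithms for \probcliqueparition under the \classETH.

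Combinatorially, each variable $x_i$ is represented by a constant-size \emph{variable gadget} $V_i$ whose minimum clique cover forces a binary choice (interpreted as the truth value of $x_i$) between two alternative ways of absorbing a designated selector point. Each clause $C_j$ is represented by a constant-size \emph{clause gadget} $C^*_j$ in which one extra clique is consumed unless at least one dedicated bridge point is allowed to join a distinguished ``satisfied'' clique of the gadget. For every literal occurrence of $x_i$ in $C_j$, a bridge point is placed so that, depending on the truth setting of $V_i$, the bridge can or cannot merge with the satisfied clique of $C^*_j$. Setting $k$ exactly to the combinatorial optimum when every clause is satisfied, a satisfying assignment yields a clique cover of size $k$, and conversely any clique cover of size $\le k$ is forced into this canonical form, allowing us to read off a satisfying assignment.

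The heart of the proof is realizing these gadgets geometrically in $\mathbb{R}^5$ with precisely the intended unit-distance incidences. I would split the five coordinates into two sub-blocks: a two-dimensional \emph{layout block} that places the $O(n)$ gadgets at pairwise distance strictly larger than $1$ (so that no clique can accidentally span two unrelated gadgets), and a three-dimensional \emph{internal block} that implements the fine-grained intra-gadget distances and routes the bridges for an essentially arbitrary clause-variable incidence bipartite graph. In the internal block, each bridge point is placed at coordinates tuned via a tensor-product-style scheme so that its only unit-distance neighbors outside its own gadget are the intended selector of $V_i$ and the intended receiver of $C^*_j$.

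The main obstacle is preventing ``accidental'' unit-distance edges that would allow unrelated cliques to merge and collapse the reduction. This is where the dimension count becomes tight: two coordinates suffice to keep gadgets separated, but routing a potentially dense clause-variable incidence without collisions seems to require three more, giving the total of five. A smaller dimensional budget would either pack gadgets too close together, or, in the spirit of the separation lemma of Capoyleas, Rote and Woeginger underlying \Cref{theorem:subexpalgo}, impose enough geometric rigidity on the unit ball graph to admit a subexponential algorithm.
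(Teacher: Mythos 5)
There is a genuine gap, and it sits exactly where you defer the work: the geometric realization. Your layout has every gadget at pairwise distance greater than $1$, yet each literal occurrence is handled by a single bridge point that is unit-distance-adjacent to a selector inside $V_i$ and to a receiver inside $C^*_j$. Such a point forces those two gadgets to lie within distance $O(1)$ of each other. For a general \textsc{3-SAT} instance the variable--clause incidence graph is an arbitrary (expander-like) bipartite graph, so transitively all $\Theta(n)$ gadgets would have to sit inside a ball of constant radius in $\mathbb{R}^5$ --- but only $O(1)$ points fit in such a ball at pairwise distance more than $1$ in fixed dimension. The standard fix is to replace direct bridges by wires (paths of points), but then the point count grows to the total wire length, which for a non-local incidence graph is superlinear; this is precisely the paper's \emph{other} reduction (Theorem~\ref{thm:eth_Rd}, via \probgridembedded and cube wiring), and it can only deliver a $2^{o(n^{1-1/d})}$ lower bound, not $2^{o(n)}$. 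Restricting to formulas whose incidence graph embeds locally (to keep wires short) does not help either, since such formulas are solvable in subexponential time. In short, any ``spread-out'' construction is doomed to the $n^{1-1/d}$ regime; your closing intuition that five dimensions are needed ``to route a dense incidence without collisions'' is not the mechanism at play.

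The paper's actual route is the opposite of spreading out. It reduces from \probThreeColoring: each edge of the source graph $G$ is replaced by a constant-size gadget (two parallel $2$-subdivisions plus two global apex vertices $c_1,c_2$) yielding a linear-size graph $G'$ that is $3$-colorable iff $G$ is; then the \emph{complement} of $G'$ is realized as a unit ball graph in $\mathbb{R}^5$, and the target \probcliqueparition instance asks for a cover by $k=3$ cliques. The hard instance is thus \emph{dense}: all points are packed so that almost every pair is adjacent, and only the $O(n+m)$ pairs corresponding to edges of $G'$ are pushed just beyond the distance threshold (placing $T\cup B\cup C$ in a $3$-dimensional subspace orthogonal to the $2$-dimensional subspace carrying $W$, so that cross distances are exactly $2$ before a careful perturbation). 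This follows the strategy of Bonamy et al.\ for \textsc{Maximum Clique} in $\mathbb{R}^4$; the extra dimension is the price of an edge gadget that preserves $3$-colorings rather than independent sets. If you want a correct proof along your own lines, you would need to redesign your gadgets so that the \emph{intended non-adjacencies}, not the adjacencies, form a sparse, geometrically realizable pattern.
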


To put Theorem~\ref{thm:eth_R5} into context, recall the result of Bonamy et al.~\cite{BonamyBBCGKRST21}, showing that \textsc{Maximum Clique} does not admit a subexponential algorithm on unit ball graphs in~$\mathbb{R}^4$. Their approach is to first argue that \textsc{Maximum Independent Set} is as hard on 2-subdivisions (graphs obtained by replacing each edge with a path of length 3) as it is on general graphs, which holds simply because a maximum independent set of a graph can be extracted from a maximum independent set of its 2-subdivision. Then their key structural observation is that a complement of \emph{any} 2-subdivision admits a unit ball representation in $\mathbb{R}^4$, therefore showing hardness of \textsc{Maximum Clique} on unit ball graphs in $\mathbb{R}^4$. Note that \textsc{Maximum Independent Set} turns into \textsc{Maximum Clique} by taking the complement.

Since we target the \probcliqueparition problem on unit ball graphs, a natural idea is to conduct the reduction in a similar spirit, but starting from \probColoring. However, the obstacle is that 2-subdivisions do not in general preserve the existence of a $k$-coloring --- only for $k = 2$, which is not suitable for a hardness reduction. Therefore, instead of replacing each edge by its 2-subdivision, we need to use a more complicated edge gadget, and the 4-dimensional representation of Bonamy et al. is no longer applicable. The straightforward triangle-like edge gadget that preserves $3$-colorings could be used in place of the 2-subdivision, see Figure~\ref{fig:gadget} for an illustration. However, it is not clear whether the resulting graph would admit a sufficiently low-dimensional representation, namely below dimension $7$. Instead, the gadget that we use is based on two parallel 2-subdivisions, plus special vertices that impose a list-coloring-like condition on the internal vertices of the subdivisions; this choice of the gadget allows us to decrease the dimension to $5$ (see Figure~\ref{fig:gadget} for the illustration of the gadget).

\begin{figure}[h]
    \centering
    \includegraphics[width=\textwidth]{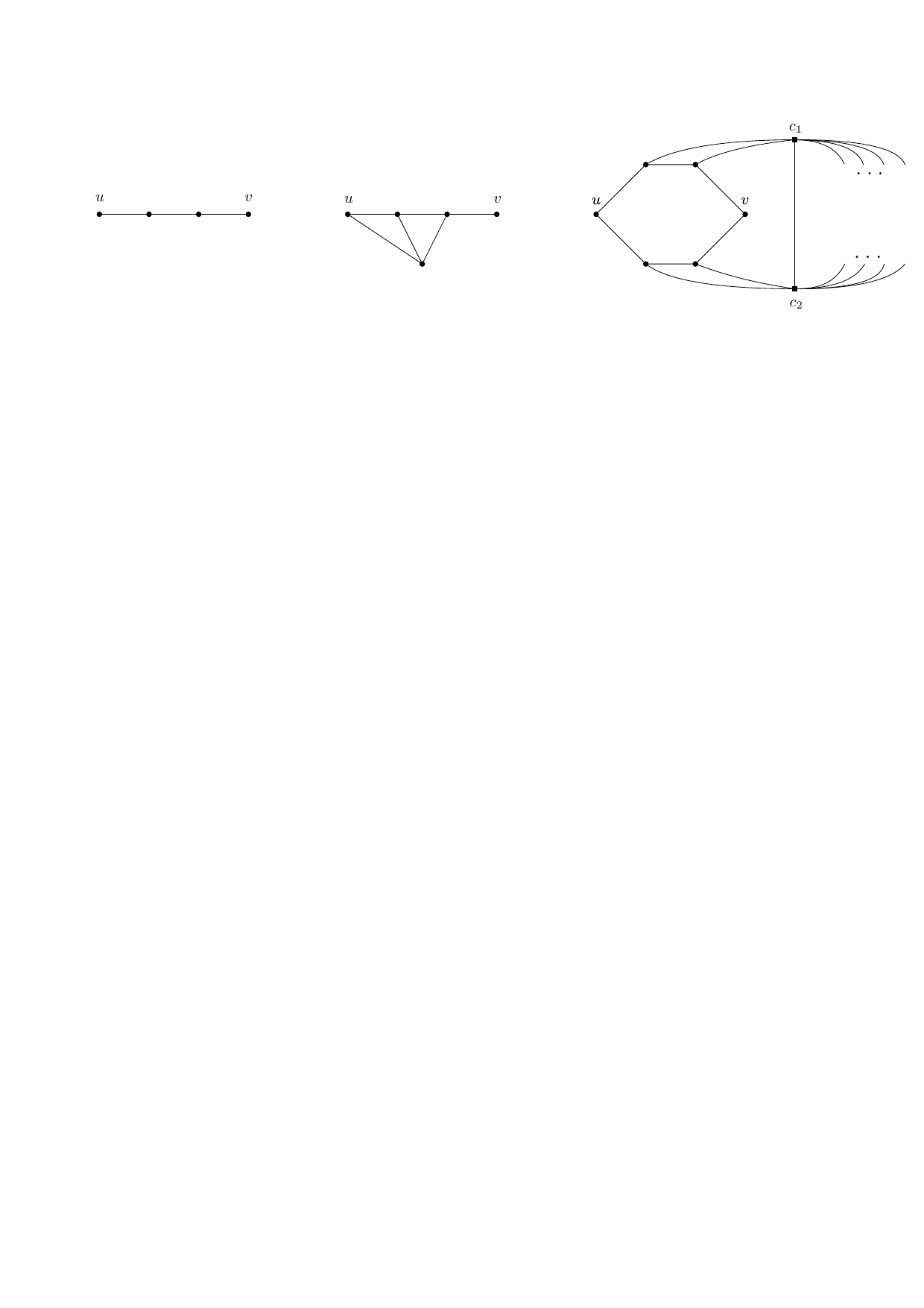}
    \caption{Edge gadgets encoding the edge between vertices $u, v$: left, $2$-subdivision of the edge, suitable for maximum independent sets; center, triangle-like gadget suitable for $3$-colorings; right, improved gadget preserving $3$-colorings --- here vertices $c_1$ and $c_2$ are connected in the same way to all edge gadgets.}
    \label{fig:gadget}
\end{figure}

\section{Preliminaries}
\label{sec:prelim}

\paragraph*{Sets, vectors and coordinates.}
For an integer $n$, we use $[n]$ to denote the set $\{1, 2, \ldots, n\}$.
We use the tuple notation for points in $\mathbb{R}^d$, i.e., a point is defined by the tuple $(a_1, a_2, \ldots, a_d)$, where $a_i \in \mathbb{R}$ is the respective coordinate for each $i \in [d]$. The variables $x_1$, $x_2$, \ldots, $x_d$ are used to denote the respective axes. We denote the origin by $O = (0, 0, \ldots, 0)$, and by $Ox_ix_j$, $i, j \in [d]$ we denote the plane spanned on the respective axes; the same notation is used for higher-dimensional subspaces too. For two points $A, B \in \mathbb{R}^d$, $\overrightarrow{AB}$ denotes the vector pointing from $A$ to $B$, its coordinates are expressed as $B - A$. We use $|| \cdot ||$ to denote the standard Euclidean norm in $\mathbb{R}^2$, therefore, $||B - A||$ is the Euclidean distance between the points $A$ and $B$, and also the length of the vector $\overrightarrow{AB}$.

\paragraph*{Unit ball graphs.}
Let $P = \{ p_1, \dots, p_n \}$ be a set of points in $\mathbb{R}^d$ and $B$ be a set of balls $b_i$ of radius $1$, centered at $p_i$.
A unit ball graph on $P$ is a graph over the vertex set $P$, in which two vertices $p_i$ and $p_j$ are adjacent if and only if the balls $b_i$ and $b_j$ intersect.

\paragraph*{Exponential-time hypothesis.}
The exponential-time hypothesis (\classETH), due to Impagliazzo, Paturi and Zane~\cite{ImpagliazzoP01, ImpagliazzoPZ01}, implies that there is no algorithm that solves \textsc{3-SAT} in $2^{o(n)}$ time, where $n$ is the number of variables in the formula. Since by the Sparsification Lemma~\cite{ImpagliazzoPZ01} this holds even for linearly-many clauses in the formula, \classETH also excludes $2^{o(n + m)}$-time algorithms for \textsc{3-SAT}, where $m$ is the number of clauses. By the standard linear-size reduction from \textsc{3-SAT} to \textsc{3-Coloring}, \classETH implies that \textsc{3-Coloring} does not admit a $2^{o(n + m)}$-time algorithm, where $n$ is the number of vertices and $m$ is the number of edges in the graph.

\paragraph*{Tree decomposition.}

For a graph $G = (V, E)$, a \emph{tree decomposition} is a pair $(T, \sigma)$, where $T = (V_T, E_T)$ is a tree and $\sigma \colon V_T \to 2^{V}$ such that
\begin{itemize}
    \item for each $uv \in E$, there exists $t \in V_T$ with $u, v \in \sigma(t)$, and 
    \item for each $v \in V$, the set of nodes $t \in V_T$ with $v \in \sigma(t)$ forms a connected subtree in $T$.
\end{itemize}
The \emph{width} of $(T, B)$ is $\max_{x \in V(T)}(|B(x)| - 1)$.
The \emph{tree-width} of $G$ is the minimum width of all tree decompositions of $G$.

A \emph{nice} tree decomposition is a tree decomposition more amenable to the design of dynamic programming algorithms.
Formally, a tree decomposition $(T = (V_T, E_T), \sigma)$ rooted at $r \in V_T$ is called \emph{nice} if $\sigma(r) = \emptyset$ and each node $t \in V_T$ is one of the following types:
\begin{description}
    \item[Leaf node.] $t$ is leaf in $T$ and $\sigma(t) = \emptyset$.
    \item[Introduce node.] $t$ has exactly one child $t'$, and $\sigma(t) = \sigma(t') \cup \{ v \}$ for a vertex $v$ in $G$.
    \item[Forget node.] $t$ has exactly one child $t'$, and $\sigma(t) = \sigma(t') \setminus \{ v \}$ for a vertex $v$ in $G$.
    \item[Join node.] $t$ has exactly two children $t', t''$, and $\sigma(t) = \sigma(t') = \sigma(t'')$. 
\end{description}
It is known that given a tree decomposition, a nice tree decomposition of the same width can be computed in polynomial time \cite{Bodlaender96}.

\section{Subexponential algorithm for unit disks}

In this section, we design a  subexponential-time algorithm for \probcliqueparition on unit disk graphs.

\subexpalgo*

To design a subexponential-time algorithm, let us introduce two known techniques.
We start with the ``separation theorem'' of Capoyleas, Rote and Woeginger~\cite{CapoyleasRW91}.
Recall that we aim to partition the vertex set of a given unit disk graph into a collection of $k$ cliques.
Each clique is defined by the convex hull of the centers of disks in the clique.
In principle, these convex hulls may arbitrarily intersect each other.
The following states that we may assume that they are disjoint in an optimal solution.

\begin{theorem}[Capoyleas, Rote and Woeginger~\cite{CapoyleasRW91}]
    \label{prop:separation}
    For \probcliqueparition on unit disk graphs, there exists an optimal solution $(C_1, \dots, C_{\ell})$ such that the convex hulls of the centers in $C_i$ are pairwise disjoint.
\end{theorem}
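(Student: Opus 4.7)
The plan is to prove this by an uncrossing argument in the style of~\cite{CapoyleasRW91}. Among all optimal clique covers (those using the minimum number $\ell$ of cliques), I would fix one $(C_1,\dots,C_\ell)$ that additionally minimises a geometric ``spread'' potential, for example
\[\Phi := \sum_{i=1}^{\ell}\sum_{p,q\in C_i}\|p-q\|^2.\]
The plan is then to show that any $\Phi$-minimising optimal cover automatically satisfies the claim: if two of its hulls intersected, one could produce another clique cover of the same size $\ell$ with strictly smaller $\Phi$, contradicting the choice.

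For the contradiction step, assume $\operatorname{conv}(C_i)\cap\operatorname{conv}(C_j)\neq\emptyset$ for some $i\neq j$. I would restrict attention to $C_i\cup C_j$ and aim to construct a re-partition into two non-empty sets $C_i',C_j'$ such that (a) every two centres in the same part are at distance at most $2$ (so each part is a clique in the unit disk graph), (b) $\operatorname{conv}(C_i')$ and $\operatorname{conv}(C_j')$ are separated by a line $\ell$, and (c) the joint contribution to $\Phi$ strictly decreases. Substituting $(C_i,C_j)$ by $(C_i',C_j')$ in the global partition then yields the required contradiction. A natural way to obtain $\ell$ is to pick a point $p^*$ in the common intersection of the two hulls and rotate a line through $p^*$: a continuous argument, combined with a generic perturbation to break ties, produces an orientation whose two sides form the candidate $C_i',C_j'$.

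The heart of the proof, and the main obstacle, is simultaneously verifying (a) and (c). The principal tool is the quadrilateral (uncrossing) inequality: if segments $p_1q_1$ and $p_2q_2$ properly cross, then
\[\|p_1-p_2\|+\|q_1-q_2\| < \|p_1-q_1\|+\|p_2-q_2\|,\]
and the analogous bound holds for the diagonal pairing $\|p_1-q_2\|+\|p_2-q_1\|$. Matching up ``swapped'' centres --- a centre of $C_i$ that ends up on the $C_j$-side of $\ell$ against a centre of $C_j$ that ends up on the $C_i$-side --- and iterating the inequality allows one to bound every same-side distance in $C_i'\cup C_j'$ by some distance that already appeared within $C_i$ or $C_j$, and hence by $2$, giving (a). Since the two hulls genuinely intersect, at least one swap is non-degenerate and the strict form of the inequality delivers (c), completing the contradiction and establishing the separation statement.
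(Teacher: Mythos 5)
The paper does not actually prove this statement: it is imported verbatim from Capoyleas--Rote--Woeginger, and the surrounding text only records the single fact the theorem reduces to, namely that two intersecting clusters of diameter at most $d$ can be repartitioned into two clusters of diameter at most $d$ with \emph{disjoint} convex hulls. Your global architecture (fix an optimal cover extremal for a pairwise-distance potential, then uncross one bad pair to contradict extremality) is indeed the standard route from that two-cluster lemma to the full theorem, so the outer layer of your argument is fine. The problem is that everything you have written for the inner layer --- the two-cluster repartition itself --- restates the goal rather than proving it.

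Concretely, two gaps. First, the diameter-preservation step (a) does not follow from what you say. The quadrilateral inequality controls a \emph{sum} of two distances by a sum of two others; it never bounds a single new same-side distance by a single old intra-cluster distance. If $p\in C_i$ and $q\in C_j$ land on the same side and you match them against $p'\in C_i$, $q'\in C_j$ on the other side with $pp'$ crossing $qq'$, the inequality only gives $\|p-q\|+\|p'-q'\|<\|p-p'\|+\|q-q'\|\le 4$, i.e.\ $\|p-q\|\le 2$ only when $\|p'-q'\|\ge 2$, which you do not control. Moreover, a generic line through a point of $\mathrm{conv}(C_i)\cap\mathrm{conv}(C_j)$ simply does not split $C_i\cup C_j$ into two sets of diameter at most $2$ in general; the existence of \emph{some} good direction is precisely the content of the lemma, and your continuity-plus-perturbation remark supplies no argument for it. (The known proofs choose the splitting line from a farthest pair of $C_i\cup C_j$, or run a careful perimeter/weight-minimisation argument; either way this is where all the work is.) Second, a smaller but real inconsistency: your potential $\Phi$ uses squared distances, while the uncrossing inequality you invoke holds for first powers. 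The squared version fails: for $A=(-2,0)$, $C=(1,0)$, $B\approx(2,0)$, $D\approx(-1,0)$ with $AC$ and $BD$ properly crossing at the origin, one has $\|AC\|^2+\|BD\|^2=18$ but $\|AB\|^2+\|CD\|^2\approx 20$. Switch to $\sum\|p-q\|$ (or total hull perimeter) if you pursue this route --- but the main missing piece is the repartition lemma itself.
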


This was first proven by Capoyleas, Rote and Woeginger~\cite{CapoyleasRW91} but also by Dumitrescu and Pach~\cite{DumitrescuP11} and Pirwani and Salavatipour~\cite{PirwaniS12} later.
Theorem~\ref{prop:separation} relies crucially on the fact that for two intersecting convex polygons $P_1, P_2$ of diameter at most $d$, there exists two disjoint convex polygons $P_1', P_2'$ of diameter at most $d$ such that the vertices of $P_1$ and $P_2$ are contained in $P_1' \cup P_2'$.
In view of Theorem~\ref{prop:separation}, we will show that there are polynomially many ``relevant'' cliques in Lemma~\ref{lemma:clique-bound}.
To prove this, we will also use the following simple fact.

\begin{lemma}[Dumitresku and Pach, Lemma 2~\cite{DumitrescuP11}]
    \label{lemma:area-cover}
    Let $(G, \ell)$ be an instance of \probcliqueparition on unit disk graphs, and $(C_1, \dots, C_{\ell})$ be an optimal solution satisfying the condition of Theorem~\ref{prop:separation}.
    For a set $S$ of vertices contained in a square of constant side length, there are $O(1)$ cliques $C_i$ that intersect $S$.
\end{lemma}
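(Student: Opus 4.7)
The plan is to combine the disjointness of convex hulls from Theorem~\ref{prop:separation} with a double-counting area argument, using the optimality of the clique cover to control degenerate cases. Since each clique $C_i$ has diameter at most $2$ (the centers of two unit disks in a clique are within distance $2$), any $C_i$ intersecting $S \subseteq \Sigma$ is entirely contained in the enlarged square $\Sigma'$ of side $c + 4$, which still has constant area. By Theorem~\ref{prop:separation}, the convex hulls $H_i = \mathrm{conv}(C_i)$ of the cliques of interest form a family of pairwise disjoint convex subsets of $\Sigma'$.

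To turn this into a packing bound, I would associate each $C_i$ with the Minkowski thickening $H_i^+ = H_i \oplus D(0, r)$ for a suitably small constant $r > 0$; each $H_i^+$ has area at least $\pi r^2$ and lies in a slightly enlarged region of constant area. If every point of the plane is covered by at most a constant number of the sets $H_i^+$, then double counting yields $N \cdot \pi r^2 \leq O(1)$, and hence $N = O(1)$.

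The main obstacle is establishing this bounded multiplicity at an arbitrary point $p$. A crucial observation, which exploits optimality, is that at most one clique $C_i$ can satisfy $H_i \subseteq B(p, 1)$: if two such cliques existed, then all vertices of $C_i \cup C_j$ would lie in $B(p, 1)$ and therefore be pairwise within distance $2$, so $C_i \cup C_j$ would itself be a clique, and replacing $C_i, C_j$ by this clique would reduce the clique count, contradicting optimality. For the remaining cliques satisfying $H_i \cap B(p, r) \neq \emptyset$ but $H_i \not\subseteq B(p, 1)$, each $H_i$ contains a vertex lying outside $B(p, 1)$, defining an angular direction $\theta_i$ as seen from $p$. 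A direct computation shows that if two such angles $\theta_i, \theta_j$ are within a sufficiently small constant, then all pairwise distances between vertices of $C_i \cup C_j$ stay below $2$ (the near-$p$ vertices are mutually close since both lie within $r + 2$ of $p$, the far vertices are close due to the small angular gap and bounded radii, and the cross distances lie in between), so again $C_i \cup C_j$ would be a mergeable clique, contradicting optimality. Hence the angles $\theta_i$ must be pairwise separated by a positive constant, bounding their number by $O(1)$, which combined with the first case yields the desired multiplicity bound.
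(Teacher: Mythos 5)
The paper does not actually prove this statement; it cites Dumitrescu and Pach for it. Judged on its own merits, your argument has a genuine gap at its decisive step: the claim that two cliques whose hulls meet $B(p,r)$ and whose ``far'' vertices point in nearly the same direction from $p$ must be mergeable is false, and not by a fixable computation. Take $p$ at the origin and the two-point cliques $C_i=\{(-\mu,\mu),\,(2-\mu,\mu)\}$ and $C_j=\{(\mu,2\mu),\,(2+\mu,2\mu)\}$ for tiny $\mu>0$: both are cliques (their two points are at distance exactly $2$), their hulls are disjoint parallel segments meeting $B(p,r)$, both far vertices lie due east of $p$ (angular gap $O(\mu)$), yet $C_i\cup C_j$ contains the pair $(-\mu,\mu)$, $(2+\mu,2\mu)$ at distance greater than $2$, so the two cliques do not merge. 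The error in your ``direct computation'' is the cross distance between a near-$p$ vertex of one clique and the far vertex of the other: when both cliques have diameter close to $2$, this cross distance can exceed $2$ even at angular gap zero (and your assertion that the near-$p$ vertices are ``mutually close since both lie within $r+2$ of $p$'' only bounds their distance by $2(r+2)$, not by $2$). Worse, stacking $N$ such slightly offset parallel segments produces arbitrarily many cliques with pairwise disjoint convex hulls, pairwise \emph{non}-mergeable, all meeting an arbitrarily small disk. So no argument that, like yours, invokes optimality only through pairwise merging (together with hull disjointness and diameter at most $2$) can possibly prove the lemma: that configuration satisfies every hypothesis such an argument uses.

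What is missing is a use of \emph{global} optimality: many cliques confined to a constant-size region can be replaced wholesale by few cliques, not merely merged two at a time. Concretely, every clique meeting the square $\Sigma$ of side $c$ has diameter at most $2$ and hence lies inside the concentric square $\Sigma'$ of side $c+4$. Partition $\Sigma'$ into $\lceil (c+4)/\sqrt{2}\rceil^2=O(1)$ cells of side at most $\sqrt{2}$; each cell has diameter at most $2$, so the points of $G$ in a cell form a clique. If more than this many cliques of the solution met $\Sigma$, one could delete all of them and re-cover exactly their points by the $O(1)$ cell cliques, strictly decreasing the number of cliques and contradicting optimality. This replacement argument (which does not even need the separation property of Theorem~\ref{prop:separation}) yields the lemma directly and is essentially the argument of Dumitrescu and Pach.
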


See Dumitrescu and Pach~\cite{DumitrescuP11} for a concrete bound in the above lemma.
Now we prove a polynomial bound on the number of relevant cliques.

\begin{lemma}
    \label{lemma:clique-bound}
    Let $(G, \ell)$ be an instance of \probcliqueparition on unit disk graphs.
    Given $S \subseteq V(G)$,
    we can find in polynomial time a collection $\mathcal{R}$ of cliques in $G$ such that $|\mathcal{R}| \in |S|^{O(1)}$, and for each optimal solution $(C_1, \dots, C_{\ell})$ satisfying the condition of Theorem~\ref{prop:separation}, $S \cap C_i \in \mathcal{R}$ for all $i \in [\ell]$.
\end{lemma}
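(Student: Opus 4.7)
The plan is to use Lemma~\ref{lemma:area-cover} locally on a constant-side grid and, in each cell, enumerate the possible convex partitions of $S \cap Q$ via arrangements of $O(1)$ separating lines. The structural facts driving the argument are that within a constant-side square only $O(1)$ cliques of any optimal separated solution can appear (Lemma~\ref{lemma:area-cover}) and their convex hulls are pairwise disjoint (Theorem~\ref{prop:separation}). Together with the fact that each clique has diameter at most $2$ and hence meets only $O(1)$ grid cells, this reduces the enumeration of global traces to combining $O(1)$ local traces.

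First I tile the plane with axis-aligned squares of some fixed constant side length and denote by $\mathcal{Q}$ the set of cells $Q$ with $S_Q := S \cap Q \neq \emptyset$, so $|\mathcal{Q}| \le |S|$. Fix any optimal separated solution $(C_1, \dots, C_\ell)$ and any $Q \in \mathcal{Q}$. By Lemma~\ref{lemma:area-cover} only $r = O(1)$ cliques of the solution meet $Q$, and since every vertex of $S_Q$ lies in one of them, these cliques induce a partition $\pi_Q = \{T_1, \dots, T_r\}$ of $S_Q$ with pairwise disjoint convex hulls. Moreover, one shows that every trace $T_j$ satisfies $T_j = S_Q \cap \mathrm{conv}(T_j)$: any point $s \in S_Q \cap \mathrm{conv}(T_j)$ lies in $\mathrm{conv}(C_j)$, and by separation $s$ cannot lie in any other $\mathrm{conv}(C_i)$, so $s \in C_j$.

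To enumerate all such partitions I exploit that for each pair of parts there is a separating line which, by a standard shift-and-rotate argument, may be chosen to be bitangent at a vertex of each convex hull --- and both touch points lie in $S_Q$. Hence the $\binom{r}{2} = O(1)$ pairwise separators can each be drawn from a pool of $O(|S_Q|^2)$ candidate lines (indexed by pairs of points of $S_Q$). In the arrangement of any such tuple, each $\mathrm{conv}(T_j)$ lies in a single open cell, so $\pi_Q$ is recovered by grouping points of $S_Q$ by arrangement cell. Iterating over all $O(1)$ values of $r$ and all tuples of candidate lines yields $|S_Q|^{O(1)}$ candidate partitions of $S_Q$, and hence $|S|^{O(1)}$ candidate local traces per cell.

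To build $\mathcal{R}$ I combine across cells. Since each clique has diameter at most $2$, it meets only $O(1)$ cells, so I enumerate $O(1)$ cells from $\mathcal{Q}$ ($|S|^{O(1)}$ ways), one candidate local trace per cell ($|S|^{O(1)}$ ways each), take the union, and keep only those that form cliques in $G$. This gives $|\mathcal{R}| \le |S|^{O(1)}$ cliques constructible in polynomial time, and by construction $S \cap C_i \in \mathcal{R}$ for every clique of every optimal separated solution. The most delicate step will be the separating-line argument: I must check carefully that each bitangent can indeed be placed through two points of $S_Q$ in degenerate cases (singletons, collinear points, or parts lying on a shared supporting line) and that the recovered partition matches $\pi_Q$ rather than a coarsening of it --- this is handled either by a symbolic perturbation to general position or by over-enumerating assignments of ``on-line'' points to the two adjacent cells, which still only loses a constant factor per separating line.
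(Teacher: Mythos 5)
Your proposal rests on the same two structural pillars as the paper's proof --- Theorem~\ref{prop:separation} together with Lemma~\ref{lemma:area-cover} to bound the number of relevant cliques locally, and separating lines discretized to pass through pairs of points of $S$ --- but organizes them through a constant-side grid with a local-enumeration-then-combination step. The paper instead describes each candidate trace directly as a lens (the intersection of two disks of radius at most $2$ centered at a pair of points of $S$, which excludes ``far'' cliques) intersected with $O(1)$ half-planes through pairs of points of $S$ (which separate the $O(1)$ ``close'' cliques). Your grid makes the far/close distinction unnecessary, which is a mild reorganization rather than a genuinely different argument.

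One step fails as written. You claim that in the arrangement of the $\binom{r}{2}$ chosen separators, each $\mathrm{conv}(T_j)$ lies in a single open cell, so that $\pi_Q$ is recovered by grouping points of $S_Q$ by arrangement cell. This is false: $\mathrm{conv}(T_j)$ is guaranteed not to cross only the $r-1$ lines $\ell_{jj'}$ that involve part $j$; a separator $\ell_{j'j''}$ for two \emph{other} parts may cut straight through $\mathrm{conv}(T_j)$ (picture a long thin $T_1$ crossed by the extension of the line separating two small parts $T_2,T_3$ lying off to one side). Grouping by arrangement cell therefore yields a \emph{refinement} of $\pi_Q$, and since your combination step selects only one local trace per grid cell, the resulting $\mathcal{R}$ need not contain $S\cap C_i$. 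The fix is routine and costs only a constant factor: recover $T_j$ as $S_Q$ intersected with the $r-1$ half-planes bounded by the lines involving $j$ (enumerating a side for each), which equals $T_j$ by your own observation that $T_j = S_Q\cap\mathrm{conv}(T_j)$; this is precisely the ``intersection of constantly many half-planes through points of $S$'' device the paper uses. On the degeneracies you flag: they are shared with the paper's proof and are handled by an open/closed choice per half-plane; enumerating arbitrary assignments of on-line points to the two sides would \emph{not} be a constant factor, so use the open/closed version.
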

\begin{proof}
    Let $C = S \cap C_i$ be a clique with $C \ne \emptyset$.
    We will say that a clique $C_j$ is \emph{close} to $C$ if their closest vertices have distance at most two and \emph{far} from $C$ otherwise.

    We first show how to separate $C$ from far cliques, i.e., we find a collection of closed regions $P$ such that $C$ lies within $P$ and any far clique lies outside $P$.
    Suppose that $u, v \in C$ are two vertices with the largest distance $r \le 2$ in $C$.
    Then, $C$ is contained in the intersection of two disks of radius $r$ centered at $u$ and $v$, and every vertex of every far clique from $C$ is outside of these disks.
    For each $u, v \in S$, let $P_{u, v}$ be the intersection of such two disks, and let $R_{u, v}$ be the vertices of $S$ that lie in $P_{u, v}$.
    Let $\mathcal{R}'$ be the collection of vertex sets containing $R_{u, v}$ for each $u, v \in S$.
    We then have $|\mathcal{R}'| \in O(|S|^2)$, and for every $C = S \cap C_i$ there exists $R \in \mathcal{R}'$ that does not intersect any clique far from $C$.

    Next, we discuss how to separate $C$ from close cliques.
    By the above characterization, $C$ is contained in a $2 \times 4$-rectangle (not necessarily axis-aligned).
    For each close clique $C'$ of $C$, there exists a vertex $t$ in $C'$ with distance at most $2$ to a vertex in $C$, and every vertex in $C'$ has distance at most $2$ to $t$, so every vertex of $C'$ is at most at distance $4$ from some vertex of $C$.
    Therefore by extending the $2 \times 4$ rectangle containing $C$ by $4$ in every direction, we obtain a $10 \times 12$ rectangle that contains every close clique of $C$.
    Thus, by Lemma~\ref{lemma:area-cover}, there are $O(1)$ close cliques $C_j$ with $j \in [\ell]$.
    For a clique $C_j$, $j \in [\ell]$, since the convex hulls of $C$ and $C_j$ do not overlap by Theorem~\ref{prop:separation}, there is a line that separates $C$ and $C_j$, this line also separates the convex hulls of $C$ and $C_j \cap S$.
    Moving this line, we find two vertices on the boundary of the convex hull of $C$ or two vertices on on the boundary of the convex hull of $S \cap C_j$, such that the line through them separates $C$ and $S \cap C_j$ in the plane.
    Let $\mathcal{P}''$ be the collection of regions
    obtained as the intersection of constantly\footnote{The constant depends on Lemma~\ref{lemma:area-cover}.} many open or closed semi-planes whose boundaries go through two points of $S$.
    Let $\mathcal{R}''$ be the collection of vertex sets such that for each region in $\mathcal{P}''$, there is a vertex set in $\mathcal{R}''$ containing exactly the vertices of $S$ lying in this region.


    Finally, let $\mathcal{R}$ be the collection of intersections of $R'$ and $R''$ for $R' \in \mathcal{R}'$ and $R'' \in \mathcal{R}''$.
    Clearly, $|\mathcal{R}| \in |S|^{O(1)}$. By the above, we have that for $C = S \cap C_i$, 
    there exists $R' \in \mathcal{R}'$ that is disjoint from $S \cap C_j$ for every clique $C_j$ that is far from $C$, and there exists $R'' \in \mathcal{R}''$ that is disjoint from $S \cap C_h$ for every clique $C_h$ close to $C$; on the other hand, $R'$ and $R''$ contain $C$.
    Therefore, $R' \cap R''$ is disjoint from $S \cap C_j$ for every $j \ne i$, and contains $C$.
    Since $V(G) = C_1 \cup \ldots \cup C_\ell$ and $R', R'' \subseteq S \subseteq V(G)$, $R' \cap R''$ contains no vertices outside of $C$, and $C = R' \cap R'' \in \mathcal{R}$.
    Moreover, every $R \in \mathcal{R}$ is a clique since every $R' \in \mathcal{R}'$ is a clique.
    This completes the proof of the lemma.
\end{proof}

We will also use the framework of de Berg et al.~\cite{BergBKMZ20} for the design of subexponential-time algorithms for geometric intersection graphs.
First, let us introduce some terminology.
For a graph $G = (V, E)$ and $\kappa \in \mathbb{N}$, a \emph{$\kappa$-partition} of $G$ is a partition $(P_1, \dots, P_{\nu})$ of $V$ such that every $P_i$ induces a connected subgraph which is a union of at most $\kappa$ cliques.
For a $\kappa$-partition $\mathcal{P}$ of $G$, the \emph{$\mathcal{P}$-contraction} of $G$, denoted by $G_{\mathcal{P}}$, is the graph obtained by contracting every $P_i$ into a single vertex, that is, $V(G_{\mathcal{P}}) = \{ P_1, \dots, P_{\nu} \}$ and $E(G_{\mathcal{P}}) = \{ P_i P_j \mid \exists v_i \in P_i, v_j \in P_j \colon v_i v_j \in E(G)\}$.
Let $\gamma \colon \mathbb{N} \to \mathbb{N}$ be a weight function.
For a tree decomposition $(T, \sigma)$ of $G_{\mathcal{P}}$, its \emph{weighted width} with respect to $\gamma$ is defined by $\max_{t} \sum_{P_i \in \sigma(t)} \gamma(|P_i|)$, where the maximum is over the nodes $t$ of $T$.

The main technical step of the algorithmic framework of de Berg et al.\ is the following theorem, restricted to the case of unit disk graphs.

\begin{theorem}[\cite{BergBKMZ20}, Theorem 2.11 applied to unit disk graphs]\label{thm:framework}
    For a weight function $\gamma$ such that $\gamma(t) \in O(t^{1/2-\varepsilon})$ for $\varepsilon > 0$, there exists a $\kappa$-partition $\mathcal{P}$ for $\kappa \in O(1)$ such that $G_{\mathcal{P}}$ has weighted treewidth $O(\sqrt{n})$ that can be computed in $2^{O(\sqrt{n})}$ time.
\end{theorem}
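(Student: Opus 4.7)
The plan is to prove \autoref{thm:framework} by following the separator-based blueprint of de Berg et al., specialized to unit disk graphs. The first ingredient is a canonical clique cover derived from a grid. Overlay the plane with an axis-aligned grid of constant side length $s \le 1/\sqrt{2}$ so that the disk centers inside any one cell form a clique in $G$. This yields a partition of $V(G)$ into (up to $n$) cliques $Q_1, \dots, Q_m$, one per non-empty cell. To turn this clique cover into a $\kappa$-partition, I would greedily merge neighbouring cells so that each resulting part $P_i$ is a connected union of at most $\kappa$ cliques: a fixed bounded-radius neighbourhood of a cell contains $O(1)$ other cells, so one can enforce connectivity of $G[P_i]$ using $\kappa \in O(1)$. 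This gives the required partition $\mathcal{P}$ and the contracted graph $G_{\mathcal{P}}$.

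Next, I would build the tree decomposition of $G_{\mathcal{P}}$ of weighted width $O(\sqrt{n})$ recursively via a weighted geometric separator. For a set of grid cells with weights $\gamma(|Q_i|)$, I would search for an axis-parallel line $\ell$ that (i) splits the point set (weighted by cardinality) into two parts of constant balance, and (ii) crosses cells whose total $\gamma$-weight is $O(\sqrt{n})$. To establish such $\ell$ exists, I would apply a pigeonhole/averaging argument over the $O(\sqrt{n})$ possible positions of a balancing line within a $\sqrt{n} \times \sqrt{n}$ window: the total $\gamma$-weight of all cells is, by the Hölder-type inequality $\sum_i \gamma(|Q_i|) \le c \sum_i |Q_i|^{1/2-\varepsilon} \le c\, m^{1/2+\varepsilon} n^{1/2-\varepsilon}$ combined with $m \le n$, bounded by $O(n^{1-\varepsilon})$, so at least one of $\Theta(\sqrt n)$ candidate lines in each direction is crossed by cells of total weight $O(n^{1-\varepsilon}/\sqrt n) = O(n^{1/2-\varepsilon})$, which is even stronger than needed. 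The cells crossed by $\ell$, together with recursively obtained separators on both sides, form the bags of a tree decomposition of weighted width $O(\sqrt n)$ by the standard unfolding.

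For the running time $2^{O(\sqrt n)}$, I would observe that the construction above is already polynomial once the separator direction is fixed, but to guarantee the weighted width bound we may need to try several candidate lines per level of recursion; since the recursion depth is $O(\log n)$ and there are $\poly(n)$ candidate lines, the total time is polynomial, well within $2^{O(\sqrt n)}$. The main obstacle, and the step where the hypothesis $\gamma(t) = O(t^{1/2-\varepsilon})$ becomes indispensable, is the weighted separator bound itself: a large clique $Q_i$ sitting exactly on the dividing line would contribute its full weight $\gamma(|Q_i|)$ to the bag, and only the sublinear growth $t^{1/2-\varepsilon}$ (strictly below $\sqrt t$) allows the averaging to amortize these contributions down to $O(\sqrt n)$ per bag. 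Handling this cleanly—ruling out adversarial clustering of mass near any candidate line—is the technical heart of the proof, and is what forces the strict inequality in the exponent of $\gamma$.
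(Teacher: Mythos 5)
First, note that the paper does not prove this statement at all: it is imported verbatim from de Berg et al.~\cite{BergBKMZ20} (their Theorem 2.11 specialized to $d=2$), so the only fair comparison is against their argument. Your overall blueprint (constant-size grid cells as canonical cliques, a balanced low-weight separator, recursion to build the tree decomposition) is indeed the right skeleton, and your $\kappa$-partition step is essentially fine --- in fact simpler than you make it, since for unit disks every nonempty cell is a clique and hence already connected, so $\kappa=1$ works without any merging.

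The genuine gap is in the key lemma, the existence of a balanced separator of weighted cost $O(\sqrt{n})$, and your averaging argument does not establish it. Two concrete problems. (i) The quantitative claim is wrong: with $m$ nonempty cells of sizes summing to $n$, H\"older/Jensen gives $\sum_i \gamma(|Q_i|) = O(m^{1/2+\varepsilon} n^{1/2-\varepsilon})$, which with $m \le n$ is $\Theta(n)$ in the worst case (all singleton cells), not $O(n^{1-\varepsilon})$; so the ``even stronger than needed'' conclusion evaporates, and the argument only survives if you really have $\Theta(\sqrt{n})$ admissible candidate lines each cell meets $O(1)$ times. (ii) That is exactly what is unjustified: there is no reason the point set, or the set of cardinality-balanced axis-parallel lines, lives in a $\sqrt{n}\times\sqrt{n}$ window --- balanced lines may be confined to a window of constant width (e.g.\ when a single cell carries a constant fraction of the points), in which case your pigeonhole over $\Theta(\sqrt{n})$ positions gives nothing better than the trivial $O(n)$ bound. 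This is precisely why de Berg et al.\ do not use straight-line separators at all: they take as candidates the boundaries of scaled copies of a smallest square containing a constant fraction of the points (a constant-width annulus of cells), which guarantees balance by construction, and they average over the scaling parameter with a size-dependent weighting of the cells; it is in that averaging that the hypothesis $\gamma(t)=O(t^{1/2-\varepsilon})$ is consumed to amortize the contribution of large cliques. You explicitly flag this amortization as ``the technical heart'' and leave it open, so the proposal as written stops short of the one step that actually constitutes the proof. (Minor additional point: a separating line must be thickened to a strip of width equal to the adjacency threshold so that no edges cross it, and the recursion must charge separator weights geometrically, $O(\sqrt{n_i})$ at each level, to avoid an extra $\log n$ factor in the width; both are routine but should be said.)
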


As in Berg et al.~\cite{BergBKMZ20}, we will apply Theorem~\ref{thm:framework} with $\gamma(t) = O(\log t)$.
To design a $2^{O(\sqrt{n})}$-time algorithm, one essentially needs to show that there are $|P|^{O(1)}$ possibilities for each partition class $P \in \mathcal{P}$.
We obtain this polynomial bound from Lemma~\ref{lemma:clique-bound}.
Specifically, let $(C_1, \ldots, C_\ell)$ be an optimal solution satisfying the condition of Theorem~\ref{prop:separation}.
For $P \in \mathcal{P}$, let $\mathcal{R}(P)$ be the collection of cliques returned by Lemma~\ref{lemma:clique-bound}, applied to the subset $P \subseteq V(G)$.
By the lemma, for every $i \in [\ell]$, $P \cap C_i \in \mathcal{R}(P)$.
On the other hand, every clique is contained in a $2 \times 4$ rectangle, therefore by Lemma~\ref{lemma:area-cover}
only constantly many cliqes from $C_1$, \ldots, $C_\ell$ intersect this clique.
Since $P$ is covered by at most $\kappa$ cliques, 
it also holds that only constantly many cliqes from $C_1$, \ldots, $C_\ell$ intersect $P$, where the constant depends on $\kappa$ and Lemma~\ref{lemma:area-cover}; denote this constant by $\lambda$.
Later in the algorithm, we will characterize the solution $(C_1, \ldots, C_\ell)$ on $P$ by listing the $\lambda$ cliques from $\mathcal{R}(P)$ that result from intersecting $(C_1, \ldots, C_\ell)$ with $P$.
We now proceed to the proof of the theorem.

\begin{proof}[Proof of Theorem~\ref{theorem:subexpalgo}]
    We first apply Theorem~\ref{thm:framework} with $\gamma(t) = \varepsilon \log t + 1$ for a sufficiently small constant $\varepsilon > 0$, obtaining a $\kappa$-partition $\mathcal{P}$ of $G$, and a tree decomposition of $G_{\mathcal{P}}$ of weight at most $O(\sqrt{n})$.
For $P \in \mathcal{P}$, let $\mathcal{R}(P)$ be a collection of relevant cliques in $P$ as per Lemma~\ref{lemma:clique-bound}.
We define a \emph{configuration} of $P$ by a pair $(\mathcal{C}, \chi)$ as follows.
The first element, $\mathcal{C} \subseteq \mathcal{R}(P)$, is a collection of at most $\lambda$ cliques such that $\bigcup \mathcal{C} = P$.
The second element, $\chi \colon \mathcal{C} \to \{ 0, 1 \}$, is a mapping, which we will use to indicate whether a clique $C \in \mathcal{C}$ has been covered.
We denote the set of configurations of $P$ by $\Gamma_P$.
Since $|\mathcal{R}(P)| \in |P|^{O(1)}$ by Lemma~\ref{lemma:clique-bound}, there are at most $\lambda \cdot |\mathcal{R}(P)|^{\lambda} \cdot 2^{\lambda} \in |P|^{O(1)}$ many configurations.
Thus, for a bag $t$, the number of all combinations of configurations of nodes in $t$ is at most
\begin{align*}
    \prod_{P \in \sigma(t)} |P|^{O(1)} = \exp \left( c \sum_{P \in \sigma(t)} \log |P| \right) \in 2^{O(\sqrt{n})}.
\end{align*}
Here, $c$ is a constant, and the second equality is due to the fact that the weighted treewidth is $O(\sqrt{n})$.
The running time will be dominated by this factor.

Our dynamic programming constructs a table $c_t$ for a bag $t$ indexed by a configuration for each $P \in \sigma(t)$ and an integer $\ell$.
We describe the configuration by a mapping $f$ that maps $P \in \sigma(t)$ to one of its configurations in $\Gamma_P$.
We use the notation $f(P) = (f_{\mathcal{C}}(P), f_{\chi}(P))$.
The table $c_t$ stores Boolean values, where the entry  $c_t[f, \ell]$ is true  if and only if there is a collection $(C_1, \dots, C_{\ell})$ of $\ell$ cliques such that
\begin{itemize}
    \item
        $\bigcup_{i \in [\ell]} C_i$ covers all vertices appearing strictly below $t$ (i.e., every vertex in $P \in \mathcal{P} \setminus \sigma(t)$ such that $P$ appears in the subtree rooted at $t$ is covered by $\bigcup_{i \in [\ell]} C_i$)
    \item
    $\bigcup_{i \in [\ell]} C_i$ covers all cliques $C \in f_{\mathcal{C}}(P)$ with $P \in \sigma(t)$ and $f_{\chi}(P)(C) = 1$, and 
    \item
    every clique $C_i$, $i \in [\ell]$, contains a vertex appearing strictly below $t$.
\end{itemize}
Our dynamic programming will maintain this invariant.

Now we describe our dynamic programming procedure over a nice tree decomposition (see Section~\ref{sec:prelim} for the definition).
It follows from our invariants that the input graph admits a clique cover of size $\ell$ if and only if $c_r[f, \ell] = 1$ for the root $r$.
For a non-leaf node $t$, we will denote its children by $t', t''$ ($t'$ if $t$ has one child).

\proofsubparagraph{Leaf node.}
Suppose that $t$ is a leaf node, i.e., $\sigma(t) = \emptyset$.
Then, $c_t[f, \ell]$ is true if and only if $\ell = 0$.

\proofsubparagraph{Introduce node.}
Suppose that $t$ is an introduce node, i.e., $\sigma(t) = \sigma(t') \cup \{ P \}$.
\begin{align*}
    c_t[f, \ell] =
    \begin{cases}
        c_{t'} [f|_{\sigma(t')}, \ell] & \text{ if  $f_{\chi}(P)(C) = 0$ for every $C \in f_{\mathcal{C}}(P)$}, \\
        \text{false} & \text{ otherwise.}\\
    \end{cases}
\end{align*}
Here, $f|_{\sigma(t')}$ denotes the restriction of $f$ to $\sigma(t')$.
As we only consider cliques that intersect a node strictly below $t$, we set the table entry to false if $f_{\chi}(P)$ is not uniformly zero.

\proofsubparagraph{Forget node.}
Suppose that $t$ is a forget node, i.e., $\sigma(t) = \sigma(t') \setminus \{ P \}$.
We have the following recurrence:
\begin{align*}
    c_t[f, \ell] = \bigvee_{\ell' \in \{ 0, \dots, \ell \}, f'} c_{t'}[f', \ell']
\end{align*}
where $\bigvee$ ranges over all $f'$ and $\ell'$ that satisfy the following condition.
Let $H_{f'}$ be an auxiliary graph as follows.
For every $C \in f_{\mathcal{C}}'(P)$ with $f_{\chi}(P)(C) = 0$, we add a vertex $h_C$.
Moreover, for every $P' \in \sigma(t)$ and $C \in f_{\mathcal{C}}(P')$, we add a vertex $h_{C}$ to $H$ if (i) $C$ has not been covered at $t'$, i.e., $f_{\chi}'(P')(C) = 0$ and (ii) $C$ is covered at $t$, i.e., $f_{\chi}(P')(C) = 1$.
Two vertices $h_{C}$ and $h_{C'}$ are adjacent in $H_{f'}$ if and only if $C \cup C'$ is a clique in the graph $G$.
This concludes the construction of $H_{f'}$.
Note that $H_{f'}$ has size $O(\sqrt{n})$.
Then $\bigvee$ ranges over $f'$ and $\ell'$ such that $H_{f'}$ has a clique cover $(D_{1}, \dots, D_{\ell - \ell'})$ of size $\ell - \ell'$ such that every clique $D_i$ contains a vertex $h_{C}$ for $C \in f_{\mathcal{C}}'(P)$.
Whether $f'$ and $\ell'$ fulfills this condition can be checked in $2^{O(\sqrt{n})}$ time via dynamic programming.

Specifically, we proceed in a standard fashion for a \probColoring/\probcliqueparition subset-based dynamic programming.
For each subset $S \subseteq V(H_{f'})$ and each integer $k$, $0 \le k \le \ell - \ell'$, we compute the Boolean value $d[S, k]$
that is equal to true if and only if the subgraph $H_{f'}[S]$ admits a clique cover of size $k$, where additionally every clique contains a vertex $h_C$ for some $C \in f'_C(P)$.
We initialize by setting $d[\emptyset, 0] = $ true, $d[S, 0] = $ false for each $S \ne \emptyset$, and for each $S \subseteq V(H_{f'})$, $k \in [\ell - \ell']$,
compute $d[S, k] = \bigvee_{D \text{ is an admissible clique in } H_{f'}[S]} d[S \setminus D, k - 1]$.
Clearly, the dynamic programming table above is computed in time $2^{O(|V(H_{f'})|)} = 2^{O(\sqrt{n})}$.
As there are $2^{O(\sqrt{n})}$ many choices for the configuration $f'$, we can compute $c_t[f, \ell]$ in overall time $2^{O(\sqrt{n})}$.

Let us verify that the invariant is maintained by the computation above.
If $c_t[f, \ell]$ is set to true, then there exist $f'$ and $\ell'$ satisfying the aforementioned condition, for which $c_{t'}[f', \ell']$ is also true.
Since $c_{t'}[f', \ell']$ is true, there exists a collection $(C_1, \dots, C_{\ell'})$ of cliques.
Also, $H_{f'}$ admits clique cover of size $\ell - \ell'$, which is also a collection of cliques in $G$.
Combining these cliques indeed satisfies the conditions.
 
\proofsubparagraph{Join node.}
Suppose that $t$ is a join node, i.e., $\sigma(t) = \sigma(t') = \sigma(t')$.
We have the recurrence:
\begin{align*}
    c_t[f, \ell] = \bigvee_{\ell' \in \{ 0, \dots, \ell \},\, f'\!,\, f''} (c_{t'}[f', \ell'] \wedge c_{t''}[f'', \ell - \ell']), 
\end{align*}
where $\bigvee$ ranges over functions $f', f''$ that map $P \in \sigma(t)$ to one of its configurations such that for every $P \in \sigma(t)$,
\begin{itemize}
    \item $P$ is partitioned in cliques in the same way, i.e., $f_{\mathcal{C}}(P) = f_{\mathcal{C}}'(P) = f_{\mathcal{C}}''(P)$, and 
    \item for every $C \in f_{\mathcal{C}}(P)$, $f_{\mathcal{\chi}}(P)(C) = 1$ if and only if $C$ is covered in one of the children, i.e., $f_{\mathcal{\chi}}'(P)(C) = 1$ or $f_{\mathcal{\chi}}''(P)(C) = 1$.
\end{itemize}

To see why the invariant is maintained, note that if $c_t[f, \ell]$ is set to true, then there are $\ell'$ cliques certifying $c_{t'}[f', \ell']$ being true and $\ell - \ell'$ cliques certifying $c_{t''}[f'', \ell - \ell']$ being true.
Putting them together, we obtain a collection of $\ell$ cliques satisfying the conditions.

Observe that each entry can be computed in $2^{O(\sqrt{n})}$ time.
Since there are $2^{O(\sqrt{n})}$ entries, the running time is bounded by $2^{O(\sqrt n)}$.
Note that all arithmetic operations can be performed in polynomial time: we only require comparing distances between the given points and orientations between triples of given points; see the proof of Lemma~\ref{lemma:clique-bound}.
Theorem~\ref{thm:framework} is representation-agnostic, meaning that no additional arithmetic operations are required, except for constructing the graph from the given geometric representation.

This concludes the proof of Theorem~\ref{theorem:subexpalgo}.
\end{proof}

\section{Subexponential lower bound for $d \ge 2$}
In this section, we establish the impossibility of solving the \probcliqueparition problem on $d$-dimensional unit ball graphs in time better than $2^{O(n^{1 - 1/d})}$.
For this, we use the result of de Berg et al.~\cite{BergBKMZ20}, which states that, assuming \classETH, \probgridembedded cannot be solved in time $2^{o(n)}$ time, where $n$ is the number of variables of the given formula.

\probgridembedded is defined as follows.
Let $G^2(n)$ denote the $n \times n$-grid graph, where there is a vertex $(i, j)$ for every $i, j \in [n]$ and an edge between $(i,j)$ and $(i', j')$ are adjacent if and only if $|i - i'| = |j - j'| = 1$.
We say that a graph $H$ is \emph{embedded} in $G^2(n)$ if a subdivision of $H$ is isomorphic to a subgraph of $G^2(n)$.
For a CNF formula $\phi$, the \emph{incidence graph} $G_{\phi}$ of $\phi$ is the bipartite graph, where there is a vertex for each variable and each clause, and there is an edge between a variable vertex and a clause vertex if and only if the variable appears in the clause.
A $(3, 3)$-CNF formula is a CNF formula where each variable appears at most 3 times and each clause has size at most 3.


\defproblema{\probgridembedded}%
{A $(3,3)$-CNF formula $\phi$ together with an embedding of its incidence graph $G_{\phi}$ in $G^2({n})$.}%
{Is there a satisfying assignment for $\phi$?}
\begin{proposition}[\cite{BergBKMZ20}, Theorem 3.2]{\label{prop:gridembeddedeth}}
    \probgridembedded can not be solved in time $2^{o(n)}$ unless \classETH fails.
 \end{proposition}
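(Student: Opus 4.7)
The plan is to chain a sequence of standard reductions starting from \textsc{3-SAT} under the \classETH. By the \classETH together with the Sparsification Lemma, \textsc{3-SAT} on $N$ variables and $m = O(N)$ clauses cannot be solved in $2^{o(N + m)}$ time, and this is what we will ultimately contradict.

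First, I would apply the classical linear-size reduction that replaces each variable appearing more than three times with several copies chained together by equivalence gadgets; this yields an equivalent $(3,3)$-CNF formula $\psi$ with $O(N+m)$ variables and clauses. Next, I would planarize via Lichtenstein's crossing gadget: drawing $G_\psi$ in the plane yields at most $O((N+m)^2)$ edge crossings, each of which is replaced by a constant-size planar gadget preserving both the $(3,3)$-structure and satisfiability. The result is a planar $(3,3)$-CNF formula $\psi'$ with $N' = O((N+m)^2)$ variables and clauses, together with a planar drawing of $G_{\psi'}$.

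Finally, I would convert the planar drawing into an embedding of $G_{\psi'}$ in $G^2(n)$ with $n = O(\sqrt{N'}) = O(N+m)$. By a planar grid-drawing theorem (for instance de Fraysseix--Pach--Pollack or Schnyder), any planar graph on $N'$ vertices admits a straight-line drawing on an $O(N') \times O(N')$ grid; because $G_{\psi'}$ has maximum degree $3$, this drawing can be refined to a subdivision embedding in which each edge is routed along grid edges using $O(N')$ grid vertices in total, yielding $n = O(\sqrt{N'})$. A hypothetical $2^{o(n)}$-time algorithm for \probgridembedded would then, via this chain, give a $2^{o(N+m)}$-time algorithm for \textsc{3-SAT}, contradicting the \classETH.

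The main obstacle is the last step: one must ensure the grid side ends up $O(\sqrt{N'})$ rather than merely polynomial in $N'$, so that the total number of grid vertices used for both the vertices of $G_{\psi'}$ and the routing of its subdivided edges is linear in $N'$. This requires invoking a bounded-degree planar grid-drawing result with constant-factor overhead in both axes, rather than a naive routing along separate rows and columns that would give only $n = O(N')$. Once such an embedding is in place, the reduction is entirely mechanical and the lower bound follows.
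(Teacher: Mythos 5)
The paper itself does not prove this proposition; it is quoted directly from de Berg et al.\ (Theorem~3.2 of the cited work), so your attempt is measured against that construction. Your chain is fine in spirit up to the planarization, but the final step has a genuine gap that kills the bound. The theorems you invoke (de Fraysseix--Pach--Pollack, Schnyder) place an $N'$-vertex planar graph on a grid of side $O(N')$, not $O(\sqrt{N'})$; ``refining the drawing so that edges run along grid edges'' does not compress the grid, and counting how many grid vertices are \emph{used} is beside the point --- the parameter in the lower bound is the side length of the grid containing the embedding. A general linear-area grid layout of bounded-degree planar graphs (side $O(\sqrt{N'})$) is not a theorem you can invoke: the classical VLSI-layout results based on planar separators give area $O(N'\log^2 N')$, and there exist bounded-degree planar graphs (e.g.\ the tree of meshes) requiring area $\omega(N')$. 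With $N'=\Theta((N+m)^2)$ and grid side $n=\omega(N+m)$ --- even $n=\Theta((N+m)\log(N+m))$ --- a $2^{o(n)}$-time algorithm for \probgridembedded{} no longer yields a $2^{o(N+m)}$-time algorithm for \textsc{3-SAT}, so the \classETH{} contradiction evaporates. The whole content of the proposition is that the grid side is genuinely linear in $N+m$, and that is exactly what your argument fails to deliver.

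The fix, which is what de Berg et al.\ actually do (and which is implicit already in Lichtenstein's construction), is to reverse the order of your last two steps: first route the \emph{non-planar} incidence graph of the $(3,3)$-CNF formula in an $O(N+m)\times O(N+m)$ grid (vertices placed along a line or diagonal, each of the $O(N+m)$ edges drawn as a monotone grid path), so that the $O((N+m)^2)$ crossings are each confined to a single grid cell; then refine the grid by a constant factor and replace every crossing \emph{in place} by a constant-size crossing gadget whose own incidence graph embeds in a constant-size grid patch. Planarization and embedding are performed simultaneously: the formula grows to $\Theta((N+m)^2)$ variables, but the grid side stays $O(N+m)$, which is precisely what the $2^{o(n)}$ lower bound requires. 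Treating the planarization as a black box and then re-embedding its output with a general-purpose planar drawing theorem discards exactly the grid structure that makes the bound work.
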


To show \classETH-hardness for $\mathbb{R}^d$, $d \ge 3$, we use the \emph{cube wiring theorem} due to de Berg et al.~\cite{BergBKMZ20}.
Let $B^{d}(n)$ denote $[n]^d$ and $G^d(n)$ denote the $d$-dimensional hypercube over $B^d(n)$.
Also, for $p \in B^{d-1}(n)$ and $h \in [n]$, let $\xi^{h}(p) = (p_1, \dots, p_{d-1}, h) \in B^{d}(n)$.
For $s \in \mathbb{N}$, a set $P \subseteq \mathbb{Z}^{d-1}$ is said to be $s$-spaced if there is an integer $0 \le r < s$ such that for every $p = (p_1, \dots, p_{d-1}) \in P$ and $i \in [d-1]$, $p_i \equiv r \bmod 2$.

\begin{theorem}[Cube wiring theorem \cite{BergBKMZ20}]\label{thm:cube}
    For $d \ge 3$, let $P$ and $Q$ be two 2-spaced subsets of $B^{d-1}(n)$ and let $M$ be a perfect matching in the bipartite graph $(P \cup Q, P \times Q)$.
    Then, for $n' \in O(n)$, $G^d(n')$ contains vertex-disjoint paths that connect $\xi^{1}(p)$ and $\xi^{n'}(q)$ for every $pq \in M$.
\end{theorem}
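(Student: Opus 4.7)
The plan is to prove this wiring theorem by induction on $d$, with base case $d = 3$. The key enabler throughout is the 2-spaced condition on $P$ and $Q$, which guarantees that all points of $P$ (respectively $Q$) have coordinates of a single fixed parity; consequently any axis-aligned line through such a point has at least every other grid column free, providing the slack needed to route vertex-disjoint paths in only $O(n)$ extra layers.

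For the base case $d = 3$, I would construct the paths in $G^3(n')$ using a three-stage layered scheme. Each stage occupies a slab of $O(n)$ consecutive layers in the $x_3$-direction, so the total height is $n' \in O(n)$. The first stage permutes the $x_1$-coordinates: within the $x_2$-line of each source point, wires are reshuffled so that after the slab each wire has the $x_1$-coordinate of its destination. The second stage similarly permutes the $x_2$-coordinates within each $x_1$-line. The third stage routes each wire straight up to $\xi^{n'}(q)$. Within a single slab, the problem reduces to a planar permutation on an $n \times O(n)$ grid, in which the 2-spacing guarantees that both the starting and ending configurations are well-separated, and standard sorting-network-style constructions realize any permutation as vertex-disjoint monotone paths.

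For the inductive step $d > 3$, I would first use a slab of $O(n)$ layers to sort the wires by the $(d-1)$-st coordinate of their destinations, so that all wires headed to targets with a common $q_{d-1}$ end up in the same hyperplane $x_{d-1} = q_{d-1}$ at the top of the slab, still with 2-spaced positions. The remaining routing then splits into independent $(d-1)$-dimensional wiring problems --- one per hyperplane --- to which the inductive hypothesis applies, contributing an additional $O(n)$ layers. Since each level of recursion adds only a constant-factor overhead and $d$ is a constant in the intended application, the total height remains $n' \in O(n)$.

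The main obstacle is ensuring that the intermediate configurations at the boundaries between stages remain compatible with the next stage; concretely, two matched pairs $pq$ and $p'q'$ with $p_2 = p'_2$ and $q_1 = q'_1$ would collide after a naive ``change $x_1$'' stage, and analogous pitfalls arise in higher dimensions. To circumvent this I would, within each stage, use the free cells provided by 2-spacing to first spread wires apart in a previously-unchanged coordinate, then perform the single-coordinate permutation, and finally recombine. Verifying that this spread-permute-recombine pattern preserves 2-spacing (or a suitable separation invariant) across all stages, and that it fits within $O(n)$ additional layers at each level of the induction, is the main technical content of the proof.
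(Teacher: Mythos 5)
First, note that the paper does not prove this statement at all: Theorem~\ref{thm:cube} is imported verbatim from de Berg et al.~\cite{BergBKMZ20}, so there is no in-paper proof to compare against, and your sketch has to stand on its own as a proof of a genuinely nontrivial external result. Judged that way, it has a concrete gap at its core. In the base case $d=3$ you reduce each stage to ``a planar permutation on an $n \times O(n)$ grid'' realized by ``standard sorting-network-style constructions \ldots as vertex-disjoint monotone paths''. That reduction is false as stated: in a planar grid, vertex-disjoint paths joining terminals on the bottom boundary to terminals on the top boundary can only realize order-preserving patterns, since two paths whose terminal pairs interleave along the boundary must cross, and in a planar graph a crossing forces a shared vertex (the reversal permutation is already unrealizable). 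The whole reason cube wiring needs $d \ge 3$ is that the extra dimension must be used \emph{inside} each permuting stage --- e.g.\ by realizing an odd--even transposition sorting network of depth $O(n)$ in which each adjacent swap detours into the third coordinate so the two wires pass over one another, with the $2$-spacing providing the free cells for these detours. Your sketch never says this; as written, each ``permute $x_1$ within an $x_2$-line'' stage lives in a two-dimensional slab where it cannot work.

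Beyond that, the two pieces you yourself flag as ``the main technical content'' are precisely the hard parts and are not supplied: the collision issue (several wires in the same $x_2$-line heading to the same $x_1$-column) is the classical rearrangeability problem, normally resolved by a Hall-type decomposition of the permutation into three single-coordinate stages (Slepian--Duguid/Clos style) rather than by an unspecified ``spread-permute-recombine''; and in the inductive step, the pre-stage that ``sorts wires by the $(d-1)$-st coordinate of their destinations'' is itself a $d$-dimensional wiring problem of the same kind, so without an argument that this particular routing is realizable (again via single-coordinate sorting with detours, plus a check that the resulting per-hyperplane subproblems are still $2$-spaced and within capacity) the induction is circular. The high-level architecture --- induction on $d$, single-coordinate permutation stages of height $O(n)$, $2$-spacing as slack --- is the right spirit and is broadly how such wiring theorems are proved, but with the base-case mechanism wrong as stated and the rearrangeability and pre-sorting steps left open, this is an outline rather than a proof.
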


Now we prove our theorem.
 
\subexplb*

\begin{proof}
    We first present a reduction from \probgridembedded to \probcliqueparition on unit disk graphs. 
    Let $\phi$ be a $(3,3)$-CNF formula.
    We may assume that each variable in $\phi$ appears twice positively and once negatively:
    For every variable $v$ where its occurrences are all positive or negative, delete the clauses containing~$v$.
    Also, for every variable $v$ appears twice negatively and once positively, flip its sign.
    We first describe how to construct a \probgridembedded instance $(G, k)$ from $\phi$, and specify the embedding later.
 \begin{itemize}
     \item  For each variable $x$, we create a \emph{variable gadget}, which is obtained by gluing $K_3$ and $K_2$ over one vertex, i.e., it is a paw, consisting of four vertices $u_x, u_x', v_x, w_x$ and edges $u_x u_x', u_x v_x, u_x'v_x, v_x w_x$.
     We will call $u_x, u_x', w_x$ \emph{connection vertices}.
     \item  For each clause, we introduce a single vertex $C$.
        We call it a \emph{clause} gadget.
  \item   We construct a \emph{wire} gadget, which will be used to connect a variable gadget to a clause gadget in the embedding.
  A wire corresponding to a positive literal $x$ is a path with an even number of edges, starting at $u_x$ or $u_x'$ from the variable gadget of $x$, and ending at the corresponding clause vertex.
  For a negative literal, the path starts at $w_x$ instead.
  We call a wire \emph{activated} if the value of the corresponding literal of the connection vertex is true.
    We call a wire if the corresponding literal is set to true.
 \end{itemize}
 This completes the construction of $G$. 
 Let $L$ be the total edge length of all wires. 
We show that the formula $\phi$ has a satisfying assignment if and only if $G$ has a clique cover of size $k = n + L/2$, where $n$ is the number of variables.

\proofsubparagraph{Correctness.}

Suppose that formula $\phi$ has a satisfying assignment.
We construct a clique cover of $G$ as follows.
   For each variable $x$, we pick a clique $\{ u_x, u_x', v_x \}$ if $x$ is assigned true and $\{ v_x, w_x \}$ otherwise.
   For each wire with $2\ell$ edges, pick $\ell$ edges as $K_2$'s so that all inner vertices and the connection vertex is covered if the wire is activated, and all inner vertices and the clause vertex is covered otherwise.
   Since the assignment satisfies all the clauses, every clause gadget has at least one activated wire.
   If more than one wire ends with $K_2$ containing a clause, then arbitrarily pick one wire and reduce the internal vertices of the remaining wires into a $K_1$ and pick it into the solution.
    Hence, all vertices in the variable, wire and clause gadget are covered.
We obtain a clique partition of $G$ with $k= n + L / 2$ cliques.

Conversely, assume $G$ has a clique cover of size $k$.
Each variable gadget contains at least one clique that covers the common vertices of the gadget.
Since in a wire of length $2\ell$, there are $2\ell-1$ internal vertices, and only two vertices of a wire can be covered by a clique. Thus, wire gadgets contain at least $L / 2$ cliques. 
Since $k=n+L / 2$, the solution contains exactly one clique for every variable gadget and each wire of length $2\ell$ will have exactly $\ell$ cliques.
Since every clause vertex belongs to a clique in the solution, a literal exists such that the corresponding wire is activated.
Then, the respective connection vertex is not a part of the wire clusters and thus is a part of the vertex cluster.
We assign the variable's value based on which side the clique in each variable gadget picks.
If the clique picks connection vertices corresponding to $K_3$, we set the variable to be true. If the clique contains connection vertices corresponding to $K_2$, then we set the variable to be false.
Otherwise, we set variable values arbitrarily.

\proofsubparagraph{Embedding.}

Suppose that $d = 2$.
Let $\mathcal{D}$ be a grid embedding of $G_{\phi}$.
We start by taking a 2-refinement of $\mathcal{D}$.
This will ensure that each wire gadget has even length.
For every vertex in $G_{\phi}$, we introduce a disk (of diameter 1) centered at its coordinate, unless it is a variable vertex. 
For a variable $x$, let $(i, j)$ be its coordinate in $\mathcal{D}$.
Without loss generality, assume that three vertices adjacent to $x$ in $G_{\phi}$ are at $(i -1, j)$, $(i, j + 1)$, and $(i+1, j)$.
There are three cases depending on which edge in $G_{\phi}$ incident with $x$ connects to a negative literal.

First, suppose that the edge between $(i,j)$ and $(i, j+1)$ leads to a negative literal.
Then, introduce four disks centered at $(i-1/2, j-1/2), (i+1/2, j-1/2)$ (corresponding to $u_x$ and $u_x'$), $(i, j-1/2)$ (corresponding to $v_x$), and $(i,j+1/2)$ (corresponding to $w_x$).
Otherwise, suppose that the between $(i,j)$ and $(i+1, j)$ leads to a negative literal.
Then, introduce four disks centered at $(i-1/2, j), (i, j +1/2)$ (corresponding to $u_x$ and $u_x'$), $(i, j-1/2)$ (corresponding to $v_x$), and $(i+1/2, j-1/2)$ (corresponding to $w_x$).
See Figure~\ref{fig:emb} for an illustration.

Note that only polynomial precision in coordinates is required to construct the instance, therefore the hardness also holds if the representation is given.

For $d \ge 3$, for every variable, we place three vertices adjacent to its variable gadget in a $(d-1)$-hypercube of side length 3.
We then place all these hypercubes into a $(d-1)$-hypercube of side length $n^{O(\frac{1}{d-1})}$.
Placing the clause gadgets on $B^{d-1}(m^{\frac{1}{d-1}})$, we apply the cube wiring theorem (Theorem~\ref{thm:cube}) to obtain an embedding into $B^{d}(n')$ for $n' \in O(n)$.
We then embed the variable gadgets similarly to the case $d = 2$.
\end{proof}

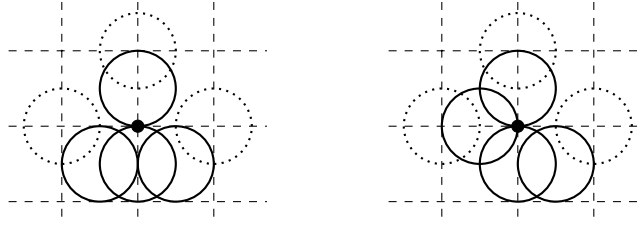
\begin{figure}
    \center
    \begin{tikzpicture}
        \draw[dashed] (-1.7,-1.2) grid (1.7, 1.7);
        \draw[fill] (0,0) circle (.08);
        \draw[thick] (-0.5, -0.5) circle (.5);
        \draw[thick] (0, -0.5) circle (.5);
        \draw[thick] (0.5, -0.5) circle (.5);
        \draw[thick] (0, 0.5) circle (.5);
        \draw[thick,dotted] (-1, 0) circle (.5);
        \draw[thick,dotted] (1, 0) circle (.5);
        \draw[thick,dotted] (0, 1) circle (.5);

        \begin{scope}[shift={(5, 0)}]
            \draw[dashed] (-1.7,-1.2) grid (1.7, 1.7);
        \draw[fill] (0,0) circle (.08);
            \draw[thick,dotted] (-1, 0) circle (.5);
            \draw[thick,dotted] (1, 0) circle (.5);
            \draw[thick,dotted] (0, 1) circle (.5);
            \draw[thick] (-0.5, 0) circle (.5);
            \draw[thick] (0, 0.5) circle (.5);
            \draw[thick] (0, -0.5) circle (.5);
            \draw[thick] (0.5, -0.5) circle (.5);
        \end{scope}
    \end{tikzpicture}
    \caption{Two cases for a variable gadget. The coordinate $(i, j)$ is marked by the black dot. The dotted disks are part of wire gadgets. Note that the variable gadget has exactly one disk intersecting a dotted disk.}
    \label{fig:emb}
\end{figure}

\section{Exponential lower bound for $d = 5$}\label{sec:r5}

In this section, we present a hardness reduction excluding better-than-exponential running time for \probcliqueparition on unit ball graphs in dimension at least $5$.
We restate the result next.

\explb*
\begin{proof}
    We show a reduction from \probThreeColoring to \probcliqueparition, where the target instance is a unit ball graph in $\mathbb{R}^d$.
    Let $G$ be the graph in the instance of \probThreeColoring.
    We first construct an enhanced graph $G'$ from $G$ and argue that this makes an equivalent instance of \probThreeColoring.
    Then, we show that the complement of the enhanced graph $G'$ admits a unit ball representation in $\mathbb{R}^5$.
    Since solving \probThreeColoring on $G'$ is equivalent to solving \probThreeCP on $\overline{G'}$, and \probThreeCP on unit ball graphs in $\mathbb{R}^5$ is the special case of \probcliqueparition with $k = 3$ on the same class of graphs, this completes the reduction.

    We now move to the details. First, we define the enhanced graph $G'$. The vertex set of $G'$ contains one vertex for each vertex of $G$, four vertices for each edge of $G$, and two additional special vertices. Formally, $V(G') = W \cup T \cup B \cup C$, where $W = \{w_v : v \in V(G)\}$, $T = \{t^1_e, t^2_e : e \in E(G)\}$, $B = \{b^1_e, b^2_e : e \in E(G)\}$, $C = \{c_1, c_2\}$. The edges are as follows: for every edge $e = uv \in E(G)$,    we construct  $w_ut_e^1$, $t_e^1t_e^2$, $t_e^2w_v$, and $w_ub_e^1$, $b_e^1b_e^2$, $b_e^2w_v$
    Additionally, $c_1$ is adjacent to all vertices of $T$, $c_2$ is adjacent to all vertices of $B$, and $c_1$ and $c_2$ are adjacent. Formally, the edge set of $G'$ is
    $$E(G') = \{w_ut_e^1, t_e^1t_e^2, t_e^2w_v, w_ub_e^1, b_e^1t_e^2, b_e^2w_v, t_e^1c_1, t_e^2c_1, b_e^1c_2, b_e^2c_2 : e \in E(G)\} \cup \{c_1c_2\}.$$
    Intuitively, $G'$ is obtained from $G$ by replacing each edge $e \in E(G)$ with two copies of its 2-subdivision: the vertices $t_e^1$ and $t_e^2$ are internal vertices of the first copy, and  the vertices $b_e^1$ and $b_e^2$ are internal vertices of the first copy. Moreover, there are two special vertices $c_1$ and $c_2$ that are adjacent to each other, and $c_1$ is adjacent to the internal vertices of the first 2-subdivision, while $c_2$ is adjacent to internal vertices of the second 2-subdivision. See Figure~\ref{fig:gadget2} for an illustration of the edge gadget.
\begin{figure}[h]
    \centering
    \includegraphics{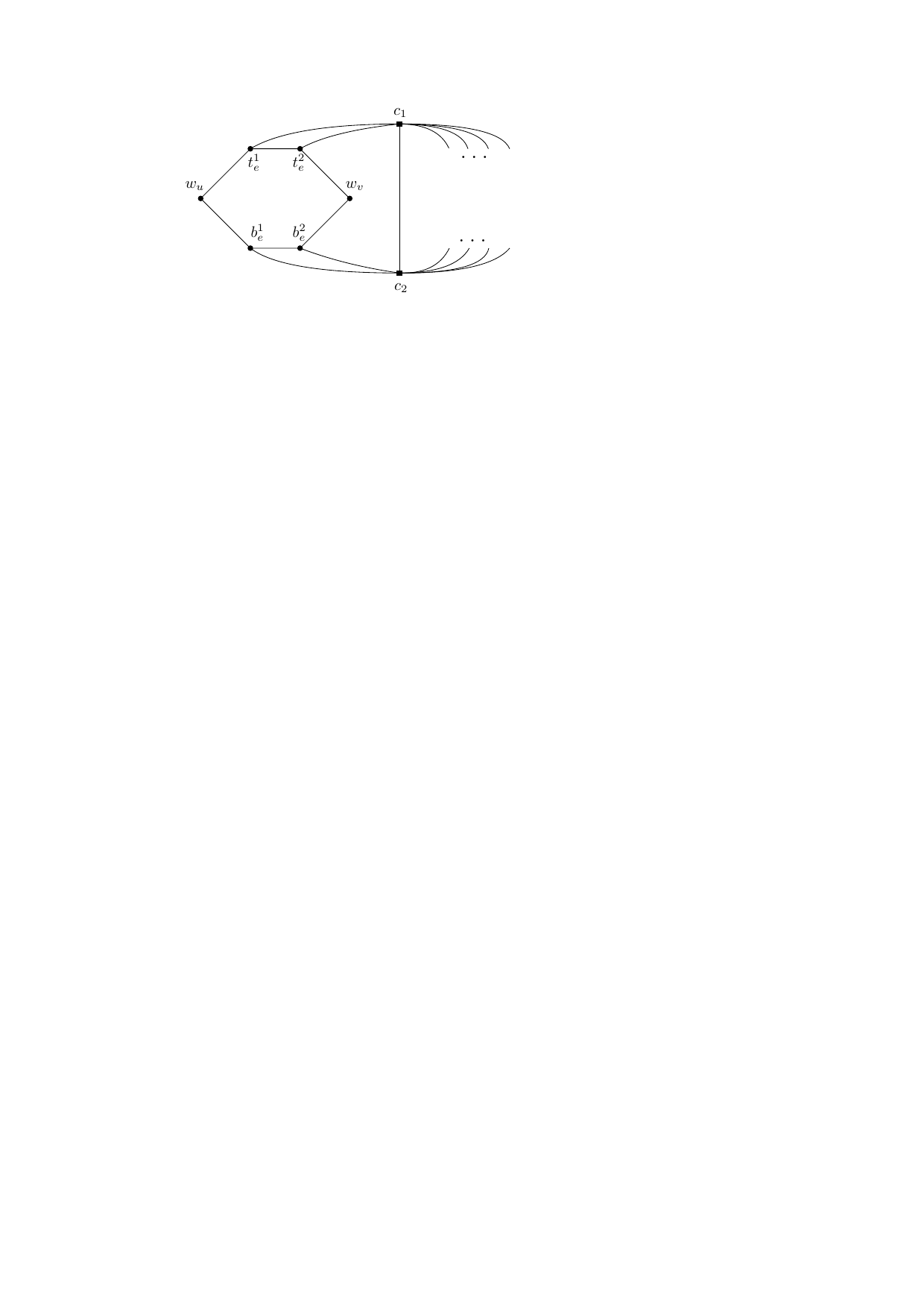}
    \caption{Edge gadget in $G'$, encoding the edge $e$ between vertices $u, v$ in $G$. Vertices $c_1$ and $c_2$ are connected in the same way to all edge gadgets.}
    \label{fig:gadget2}
\end{figure}

    We now argue that $G'$ is equivalent to $G$ in terms of $3$-colorings.
    \begin{claim}
        $G$ admits a $3$-coloring if and only if $G'$ admits a $3$-coloring.
    \end{claim}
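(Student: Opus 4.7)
The plan is to prove the two directions of the equivalence separately. The forward direction is more delicate, while the backward direction admits a short contradiction argument; the main obstacle is verifying in the forward direction that the color constraints imposed by $c_1$ and $c_2$ on the two parallel subdivision paths are simultaneously satisfiable for every edge of $G$, regardless of which pair of colors appears on the endpoints.

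For the forward direction, I would start with a proper $3$-coloring $\chi$ of $G$ using colors $\{1,2,3\}$ and extend it to $\chi'$ on $V(G')$ as follows: set $\chi'(w_v) = \chi(v)$ for every $v \in V(G)$, and set $\chi'(c_1) = 1$, $\chi'(c_2) = 2$ (which is consistent with the edge $c_1c_2$). This forces $\chi'(t_e^1), \chi'(t_e^2) \in \{2,3\}$ and $\chi'(b_e^1), \chi'(b_e^2) \in \{1,3\}$ for every edge $e = uv \in E(G)$. Since $t_e^1$ and $t_e^2$ are adjacent, they must take the two values $\{2,3\}$ in some order, and similarly for $b_e^1, b_e^2$ with the two values $\{1,3\}$. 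The key verification is that for each $e = uv$, at least one of the two possible orderings of $\{2,3\}$ on the $t$-path satisfies $\chi'(t_e^1) \ne \chi'(w_u)$ and $\chi'(t_e^2) \ne \chi'(w_v)$. This is a short case analysis that uses crucially the fact that $\chi(u) \ne \chi(v)$: for example, if $\chi(u) = 2$ then we are forced to set $\chi'(t_e^1) = 3$, which forces $\chi'(t_e^2) = 2$, and this is compatible with $w_v$ precisely because $\chi(v) \ne 2 = \chi(u)$; the remaining cases are analogous. The identical argument handles the $b$-path with colors $\{1,3\}$.

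For the backward direction, suppose $\chi'$ is a proper $3$-coloring of $G'$, and define $\chi(v) := \chi'(w_v)$; I would show that $\chi$ is a proper $3$-coloring of $G$ by contradiction. Assume some edge $e = uv \in E(G)$ satisfies $\chi'(w_u) = \chi'(w_v) = A$ for some color $A \in \{1,2,3\}$. Since $t_e^1$ is adjacent to $w_u$ and $t_e^2$ is adjacent to $w_v$, both $t_e^1$ and $t_e^2$ avoid the color $A$; as $t_e^1 t_e^2$ is an edge, $\{\chi'(t_e^1), \chi'(t_e^2)\}$ is exactly the two remaining colors $\{B,C\} = \{1,2,3\} \setminus \{A\}$. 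Now $c_1$ is adjacent to both $t_e^1$ and $t_e^2$, hence $\chi'(c_1)$ avoids both $B$ and $C$, forcing $\chi'(c_1) = A$. The identical argument applied to the $b$-path forces $\chi'(c_2) = A$. This contradicts the edge $c_1 c_2 \in E(G')$, completing the proof.
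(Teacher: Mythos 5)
Your proposal is correct and takes essentially the same approach as the paper: the forward direction fixes $\chi'(c_1)=1$, $\chi'(c_2)=2$ and resolves each edge gadget by a short case analysis that hinges on $\chi(u)\neq\chi(v)$, and the backward direction uses the constraints from both subdivision paths together with the edge $c_1c_2$ to derive a contradiction. The only cosmetic difference is in the backward direction, where you conclude $\chi'(c_1)=\chi'(c_2)=A$ and contradict the edge $c_1c_2$ directly, whereas the paper argues via a w.l.o.g.\ choice of which of $c_1,c_2$ avoids the color $A$ and then finds all three colors in the neighborhood of $t_e^2$; both are valid instances of the same idea.
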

    \begin{proof}
        Let $c: V(G) \to \{1, 2, 3\}$ be the $3$-coloring of $G$, we construct a $3$-coloring $c'$ of $G'$. Let $c'$ coincide with $c$ on the vertices of $W$; let $c'(c_1) = 1$ and $c'(c_2) = 2$.
        We now assign colors to vertices $t_e^h$ and $b_e^h$ for $e \in E(G), h \in [2]$.

        Consider an edge $e = uv \in E(G)$, so that $u$ is adjacent to $t_e^1$ and $b_e^1$ in $G'$. The vertex $t_e^1$ has an available color since only $c_1$ and $u$ have assigned colors among its neighbors; assign this color to $t_e^1$. Now, assume there is no available color for $t_e^2$, therefore all three colors appear among $c_1$, $v$, $t_e^1$. Since $c'(c_1) = 1$, either $c'(v) = 2$ and $c'(t_e^1) = 3$, or the other way around. In the former case, $c'(u) \ne 2$ since $c(\cdot)$ is a proper $3$-coloring of $G$. Assign $c'(t_e^1) = 2$ and $c'(t_e^2) = 3$; all edges between the considered vertices are properly colored. In the alternative case, the argument is symmetric: $c'(v) = 3$ and $c'(u) \ne 3$; assign $c'(t_e^1) = 3$ and $c'(t_e^2)$. The argument for the vertices $b_e^1$ and $b_e^2$ is analogous.

        In the other direction, consider a $3$-coloring $c'$ of $G'$; we claim that the restriction $c$ of $c'$ to $V(G)$ is a proper $3$-coloring of $G$. Assume this is not the case, therefore there exists an edge $e = uv \in E(G)$ with $c'(u) = c(u) = c(v) = c'(v)$.
        Since $c_1$ and $c_2$ are adjacent in $G'$, they receive different colors under $c'$ and so either $c'(c_1) \ne c'(u)$ or $c'(c_2) \ne c'(u)$; w.l.o.g. assume the former case.
        The vertex $t_e^1$ has only one available color since it cannot coincide with $c'(u)$ and $c'(c_1)$, which are two distinct colors. Then the neighborhood of $t_e^2$ contains all three colors, since $c'(u) = c'(v)$. This contradicts the fact that $t_e^2$ is properly colored by $c'$.
    \end{proof}

    Then we proceed to construct a unit ball representation of the complement of $G'$ in $\mathbb{R}^5$. To this end, we describe the locations of all vertices in $G'$ under the embedding, and argue that the distance between the locations exceeds a certain value if and only if the respective pair of vertices is adjacent in $G'$.

    First, we embed the vertices of $T$, $B$ in $C$ in the first three dimensions, i.e., their images are always zero in coordinates $4$ and $5$. Then, we embed the vertices of $W$ in the other two dimensions, i.e., such the coordinates $1$--$3$ are zeroed out. Finally, we shift the embedding of $T$ and $B$ slightly to achieve the desired edges between $W$ and $T \cup B$.

    Let $\epsilon > 0$ be a constant to be defined later. We place $c_1$ and $c_2$ symmetrically across the origin at distance of $\sqrt{3} - \sqrt{2}/2  + \epsilon$ along the first coordinate; that is,
    \begin{align*}
        \pi(c_1) &= (\sqrt{3} - \sqrt{2}/2 + \epsilon, 0, 0, 0, 0),\\
        \pi(c_2) &= (-\sqrt{3} + \sqrt{2}/2 - \epsilon, 0, 0, 0, 0).
    \end{align*}
    We then position the set $T$ on the circumference of a circle with the center on the $Ox_1$ axis lying in the plane orthogonal to the axis, with radius $r = 1 + \epsilon'$, and such that its center is $\sqrt{2}/2 - \epsilon$ away from the origin towards $-\infty$. We shall define the precise value of $\epsilon'$ later. The points of $T_1$ occupy the ``top cap'' of the circumference, i.e., a small arc close to $x_2 = r$, and the points of $T_2$ occupy the ``bottom cap'', i.e., close to $x_2 = -r$. We aim that for each $e \in E(G)$, $t_e^1$ lies directly opposite to $t_e^2$, while the remaining points are sufficiently close to each of them. Let $E(G) = \{e_1, \ldots, e_m\}$, we position the points $t_{e_1}^1$, \dots, $t_{e_m}^1$ evenly along the arc starting from the ``top'' of the circle, such that the angle between the two consecutive points is always $\delta/m$, measured from the center of the circle. We then place the points $t_{e_1}^2$, \dots, $t_{e_m}^2$ similarly, directly opposite to their counterparts. We define the exact positions as follows:
    \begin{align*}
        \pi(t_{e_j}^1) &= (-\sqrt{2}/2 + \epsilon, r \cdot \cos(\delta \cdot j/m), r \cdot \sin(\delta \cdot j/m), 0, 0),\\
        \pi(t_{e_j}^2) &= (-\sqrt{2}/2 + \epsilon, -r \cdot \cos(\delta \cdot j/m), -r \cdot \sin(\delta \cdot j/m), 0, 0).
    \end{align*}

    The points of $B$ are positioned very similarly, except that they are placed in a circle placed opposite across the origin to the circle above, i.e., its center is the point $(\sqrt{2}/2, 0, 0, 0, 0)$. And the points of $B_1$ ($B_2$) are placed close to $x_3 = r$ ($x_3 = -r$). Formally, 
    \begin{align*}
        \pi(b_{e_j}^1) &= (\sqrt{2}/2 - \epsilon, -r \cdot \sin(\delta \cdot j/m), r \cdot \cos(\delta \cdot j/m), 0, 0),\\
        \pi(b_{e_j}^2) &= (\sqrt{2}/2 - \epsilon, r \cdot \sin(\delta \cdot j/m), -r \cdot \cos(\delta \cdot j/m), 0, 0).
    \end{align*}

    Note that the image of every point in $T \cup B$ is exactly $R_1 = \sqrt{(\sqrt{2}/2 - \epsilon)^2 + r^2}$ away from the origin.
    Assume $\epsilon$ is such that $R_1 < 4$, and let $R_2 = \sqrt{4 - R_1^2}$. We place the points of $W$ in the plane $Ox_4x_5$ exactly at the distance of $R_2$ from the origin. Namely, consider the circle in $Ox_4x_5$ centered at the origin with the radius of $R_2$. We place the points of $W = \{w_1, \ldots, w_n\}$ evenly along the circumference, such that the angle between consecutive points is exactly $\delta/n$:
    \begin{align*}
        \pi(w_i) &= (0, 0, 0, R_2 \cdot \cos(\delta \cdot i/n), R_2 \cdot \sin(\delta \cdot i/n)).
    \end{align*}
    See Figure~\ref{fig:embedding} for the illustration of the embedding $\pi$.

    \begin{figure}[h]
        \centering
        \includegraphics[width=\textwidth]{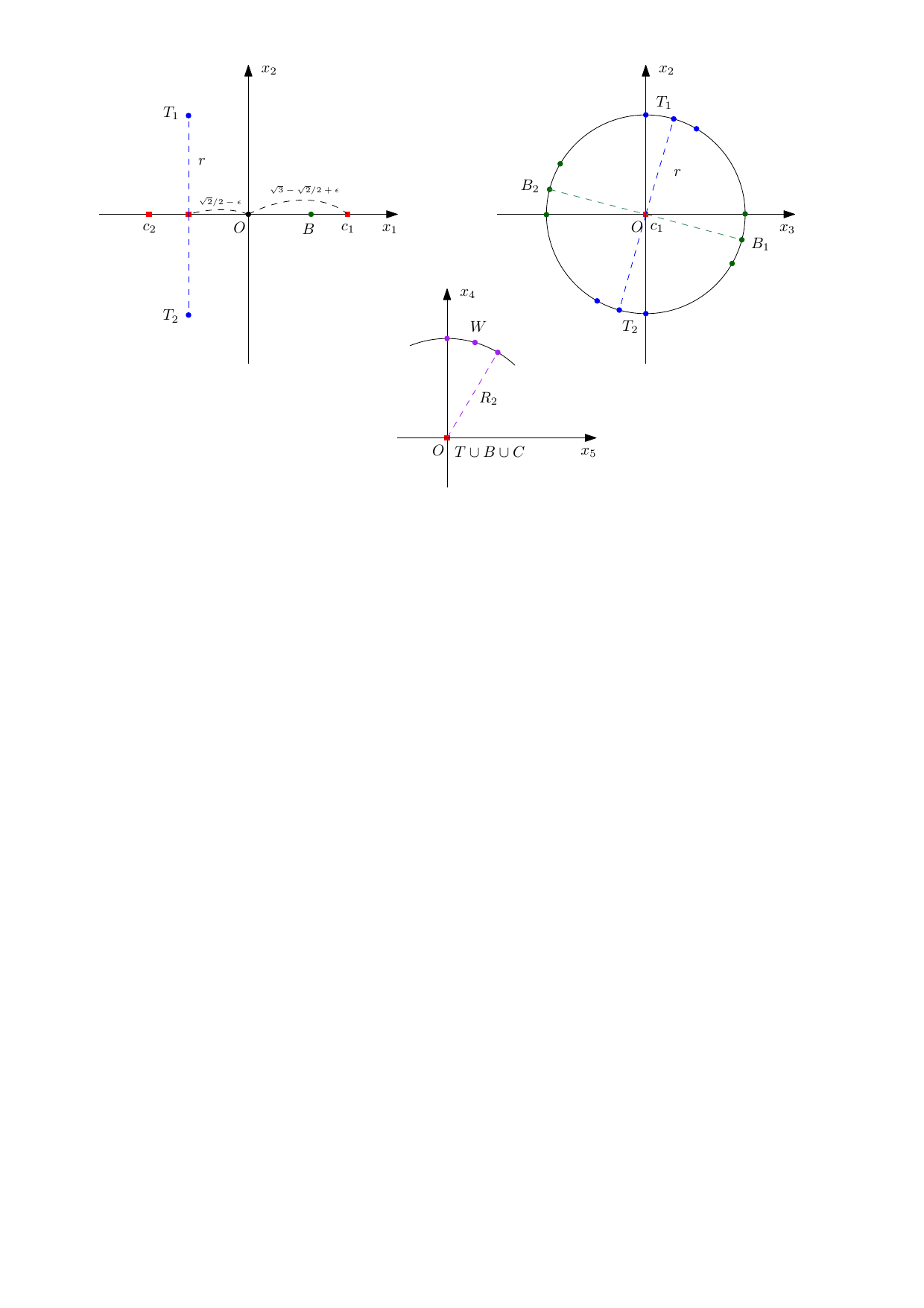}
        \caption{Illustration of the embedding $\pi$, showed by schematic projections on the three planes.}
        \label{fig:embedding}
    \end{figure}

    We now show that $\pi$ ``nearly'' gives the desired embedding of $G'$. That is, we show that every adjacent pair is at distance strictly more than $2$ and every non-adjacent pair is at distance strictly less than $2$, except for the pairs of form $(w, v)$, $w \in W$, $v \in T \cup B$, which are at distance exactly $2$. Later we will slightly modify the embedding $\pi$ to make sure that exactly the required pairs of this form are sufficiently far from each other.

    \begin{claim}
        There exists $\xi > 0$ such that the following holds:
        \begin{align}
            ||\pi(w) - \pi(v)|| &= 2, \text{ for each } w \in W, v \in T \cup B,\label{eq:pi1}\\
            ||\pi(t_e^1) - \pi(t_e^2)|| &\ge 2 + \xi, \text{ for each } e \in E(G),\label{eq:pi2}\\
            ||\pi(b_e^1) - \pi(b_e^2)|| &\ge 2 + \xi, \text{ for each } e \in E(G),\label{eq:pi3}\\
            ||\pi(t_e^h) - \pi(c_1)|| &\ge 2 + \xi, \text{ for each } e \in E(G), h \in [2],\label{eq:pi4}\\
            ||\pi(b_e^h) - \pi(c_2)|| &\ge 2 + \xi, \text{ for each } e \in E(G), h \in [2],\label{eq:pi5}\\
            ||\pi(c_1) - \pi(c_2)|| &\ge 2 + \xi,\label{eq:pi6}
        \end{align}
        and for any other two vertices $v, u$ of $G'$, the distance $||\pi(v) - \pi(u)||$ is at most $2 - \xi$.
        \label{claim:pi}
    \end{claim}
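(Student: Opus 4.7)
My plan is to verify the distance constraints by a direct case analysis on pairs of vertices, taking advantage of the structural symmetry of the embedding; the value of $\xi$ will emerge at the very end as the minimum positive slack across all cases. Each vertex type ($W$, $T$, $B$, $C$) sits in a well-defined geometric configuration, so the Euclidean distance between any two embedded points reduces to a short calculation in coordinates.

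First I will establish the exact equality (\ref{eq:pi1}). Every point $\pi(v)$ for $v \in T \cup B$ satisfies $x_1^2 + x_2^2 + x_3^2 = (\sqrt{2}/2 - \epsilon)^2 + r^2 = R_1^2$ and $x_4 = x_5 = 0$, while every $\pi(w)$ for $w \in W$ satisfies $x_4^2 + x_5^2 = R_2^2$ and $x_1 = x_2 = x_3 = 0$. Since these coordinate subspaces are orthogonal, $||\pi(w) - \pi(v)||^2 = R_1^2 + R_2^2 = 4$ by the defining relation of $R_2$.

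Next I will verify the five ``far'' inequalities (\ref{eq:pi2})--(\ref{eq:pi6}) by direct calculation. For (\ref{eq:pi2})--(\ref{eq:pi3}), $t_e^1, t_e^2$ and $b_e^1, b_e^2$ are antipodal pairs on circles of radius $r = 1 + \epsilon'$, giving distance exactly $2r = 2 + 2\epsilon'$. For (\ref{eq:pi4})--(\ref{eq:pi5}), the $x_1$-difference between $\pi(t_e^h)$ and $\pi(c_1)$ is $\sqrt{3}$ while the remaining coordinates of $\pi(t_e^h)$ form a length-$r$ vector in $Ox_2x_3$, so the squared distance equals $3 + r^2 > 4$. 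For (\ref{eq:pi6}), $||\pi(c_1) - \pi(c_2)|| = 2\sqrt{3} - \sqrt{2} + 2\epsilon > 2$ because $2\sqrt{3} - \sqrt{2} > 2$. Let $\xi_1 > 0$ be the minimum slack across these five cases.

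The main obstacle is the last step: verifying that every remaining pair is at distance strictly less than $2$ under an appropriate choice of parameters. The cases to enumerate are pairs within $W$ (distance $2R_2 |\sin(\delta(i-i')/(2n))|$, kept small by taking $\delta$ small, since $R_2 < 2$), non-antipodal pairs within a single circle in $T$ or $B$ sharing a cap (the same trigonometric bound with factor $r$), cross pairs between a $T$-point and a $B$-point (where the $x_1$-offset $\sqrt{2} - 2\epsilon$ combines with the rotated parametrizations of the two circles, chosen so that the $x_2, x_3$ contributions partially cancel), and the asymmetric pairs $t_e^h$--$c_2$ and $b_e^h$--$c_1$ (where the $x_1$-gap is the smaller value $\sqrt{3} - \sqrt{2} + 2\epsilon$ and the circle contributes only $r^2 \approx 1$ to the squared distance). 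The parameters must be tuned in a coupled way: $\epsilon'$ must be positive to ensure (\ref{eq:pi2})--(\ref{eq:pi3}), but small enough relative to $\epsilon$ so that the cross distances stay strictly below $2$; $\delta$ must be small enough so that all intra-circle and intra-$W$ distances stay below $2$. Setting $\xi_2$ to be the minimum slack across these cases and $\xi = \min(\xi_1, \xi_2) > 0$ completes the proof.
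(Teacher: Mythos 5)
Your overall strategy matches the paper's: compute every pairwise distance in coordinates, use the orthogonality of $Ox_1x_2x_3$ and $Ox_4x_5$ together with $R_1^2+R_2^2=4$ for the equality \eqref{eq:pi1}, verify the five ``far'' pairs directly, and let $\xi$ be the minimum slack over finitely many cases. The ``far'' half and the equality are fine as you present them.

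However, your enumeration of the pairs that must end up \emph{below} $2$ has a genuine gap: you never handle the near-antipodal pairs $t_e^1,t_{e'}^2$ (and $b_e^1,b_{e'}^2$) with $e\ne e'$. These are non-adjacent in $G'$, so their distance must be at most $2-\xi$, yet they lie on a circle of radius $r=1+\epsilon'$ at angular distance within $\delta$ of $\pi$; their distance is $2(1+\epsilon')\cos(\theta/2)$ with $\theta\ge\delta/m$ the deviation from exact antipodality, i.e.\ only barely below $2+2\epsilon'$. Your phrase ``non-antipodal pairs \ldots\ sharing a cap'' covers only the easy same-cap pairs, and your proposed remedy --- take $\delta$ small so that ``all intra-circle \ldots\ distances stay below $2$'' --- points in the wrong direction here: shrinking $\delta$ pushes these cross-cap pairs closer to exact antipodality and hence closer to distance $2+2\epsilon'>2$. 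The missing (and most delicate) coupling, which the paper makes explicit, is that $\epsilon'$ must be chosen quadratically small in the angular separation, e.g.\ $\epsilon'\le\delta^2/(20m^2)$, so that $(1+\epsilon')\cos(\delta/(2m))\le 1-\epsilon'/4$; in particular $\epsilon'$ (and hence $\xi$) is instance-dependent and shrinks with the number of edges $m$, rather than being a constant fixed in advance of $\delta$. A second, minor omission is the pair type $w\in W$ versus $c_h$, whose squared distance $R_2^2+(\sqrt{3}-\sqrt{2}/2+\epsilon)^2\approx 3.55$ is comfortably below $4$ but still needs to appear in the case list.
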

    \begin{proof}
        Let $\xi = \epsilon'/2$. Equation~\eqref{eq:pi1} holds immediately by construction, since each $v \in T \cup B$ is situated exactly $R_1$ away from the origin, each $w \in W$ exactly $R_2$ away from the origin, $T \cup B$ is contained in the 3-dimensional subspace $Ox_1x_2x_3$ which is orthogonal to the plane $Ox_4x_5$ where $W$ is contained, and $R_1^2 + R_2^2 = 4$ by definition of $R_2$.

        For Equation~\eqref{eq:pi2}, observe that $||\pi(t_e^1) - \pi(t_e^2)|| = 2 + 2\epsilon'$ for each $e \in E(G)$ since these two points are situated diametrically opposite to each other on a circle of radius $1 + \epsilon'$. Therefore, $||\pi(t_e^1) - \pi(t_e^2)|| \ge 2 + \xi$ since $\epsilon' = 2\xi \ge \xi/2$. On the other hand, consider the points $t_e^1$ and $t_{e'}^2$ for $e \ne e'$. Since $t_{e'}^2$ lies on the same circle at the angle of at least $\delta/m$ away from $t_e^2$, the distance between $t_e^1$ and $t_{e'}^2$ is at most $2 (1 + \epsilon') \cos (\delta/2m)$. Therefore, if it holds that $(1 + \epsilon') \cos(\delta/2m) \le 1 + \xi/2 = 1 + \epsilon'/4$, then all distances between the points of $T$ are as desired. Moreover, exactly the same arguments hold for Equation~\eqref{eq:pi3} and the distances between the points of $B$. We now show this bound given that $\epsilon'$ is sufficiently small:
        \begin{multline*}
            \epsilon' \le \frac{\delta^2}{20m^2} \implies 
            \frac{1 + \epsilon'}{\epsilon'} \ge \frac{20m^2}{\delta^2} \implies
            \frac{\delta^2}{16 m^2} \ge \frac{5}{4} \cdot \frac{\epsilon'}{1 + \epsilon'}\\
            \implies 1 - \frac{\delta^2}{16 m^2} \le 1 - \frac{5}{4} \cdot \frac{\epsilon'}{1 + \epsilon'} = \frac{1 - \epsilon'/4}{1 + \epsilon'}\\
            \implies
            (1 + \epsilon') \cdot \cos\frac{\delta}{2m} \le (1 + \epsilon') \cdot (1 - \frac{\delta^2}{16 m^2}) \le 1 - \epsilon'/4.
        \end{multline*}
        Here, we also use that $\delta$ is a sufficiently small constant.

        Consider now Equation~\eqref{eq:pi4}, the distance between $\pi(c_1)$ and $\pi(t_e^h)$ is equal to $\sqrt{3 + r^2}$ for each $e \in E(G)$ and $h \in [2]$. It is therefore sufficient to have $\sqrt{3 + (1 + \epsilon')^2} \ge 2 + \xi$, which holds since $\epsilon' = 2\xi$, and the same argument holds for Equation~\eqref{eq:pi5} because of the symmetry.
        Note that the distance between $\pi(c_2)$ and $\pi(t_e^h)$ for any $e \in E(G)$, $h \in [2]$ is equal to 
        $$\sqrt{(\sqrt{3} - \sqrt{2} + 2\epsilon)^2 + r^2} = \sqrt{(\sqrt{3} - \sqrt{2})^2 + 1 + O(\epsilon + \epsilon')} < 1.5 \le 2 - \xi,$$
        for sufficiently small $\epsilon$, $\epsilon'$ and $\xi$. The same holds for $\pi(c_1)$ and $\pi(b_e^h)$ for any $e \in E(G)$, $h \in [2]$.

        For Equation~\eqref{eq:pi6}, $$||\pi(c_1) - \pi(c_2)|| = 2\sqrt{3} - \sqrt{2} + 2\epsilon\ge 2.049 \ge 2 + \xi$$
        when $\xi$ is sufficiently small.

        It remains to verify that pairwise distances not discussed above are bounded by $2 - \xi$. Consider first $w \in W$ and $c_h$ for $h \in [2]$,  the respective squared distance is
        \begin{multline*}
        ||\pi(w) - \pi(c_h)||^2 = R_2^2 + (\sqrt{3} - \sqrt{2}/2 + \epsilon)^2 = 4 - R_1^2 + (\sqrt{3} - \sqrt{2}/2)^2 + O(\epsilon)\\
        \le 5.025 - R_1^2 + O(\epsilon) = 5.025 - (\sqrt{2}/2 - \epsilon)^2 - (1 + \epsilon')^2 + O(\epsilon) \\
        = 5.025 - 1/2 - 1 + O(\epsilon + \epsilon') = 3.525 + O(\epsilon + \epsilon') \le 3.8
        \end{multline*}
        for sufficiently small $\epsilon$ and $\epsilon'$. Therefore, $||\pi(w) - \pi(c_h)|| \le 2 - \xi$ when $\xi$ is sufficiently small.

        Note also that when $\delta$ is a sufficiently small constant, distances between all pairs of vertices in $W$ under $\pi$ are at most $1$, since the images occupy an arc which is a small fraction of a constant-radius circle, and the same holds for pairs in $T_1$, $T_2$, $B_1$, $B_2$.

        Finally, it remains to consider pairs of the form $t \in T$, $b \in B$. Observe that when projecting $T$ and $B$ orthogonally on the plane $Ox_2x_3$, these sets lie on the same circle of radius $r = 1 + \epsilon'$, and the radial distance between a point in $T$ and a point in $B$ is always at most $\pi / 2 + \delta$, since $T_1$, $B_1$, $T_2$, $B_2$ are each rotated $\pi/2$ further away from the previous set, and each of the four sets occupies an arc of radial length at most $\delta$. Therefore, 
        \begin{multline*}
            (2 - \xi)^2 - ||\pi(t) - \pi(b)||^2 \ge (2 - \xi)^2 - (\sqrt{2} - 2\epsilon)^2 - 2 (1 + \epsilon')^2 (1 + \sin \delta) \\
            \ge 4(\sqrt{2} - 1) \epsilon - 4\xi - 6 \epsilon' - 8 \delta,
        \end{multline*}
        by using $\epsilon', \epsilon \le 1$ and $\sin \delta \le \delta$.
        The above value is greater than zero if $\epsilon \ge 8(\epsilon' + \xi + \delta)$.

        To conclude the proof of the claim,  we note that the parameters $\xi$, $\epsilon'$, $\delta$, $\epsilon$ clearly admit values that satisfy all the restrictions above. Indeed, it is only required that each of them does not exceed a certain constant independent of the other parameters, and additionally that $2\xi = \epsilon' \le \frac{\delta^2}{20m^2}$, and $\epsilon \ge 8(\epsilon' + \xi + \delta)$.

    \end{proof}

    Finally, we construct the embedding $\pi'$ that gives the desired representation of $G'$. For that, we modify $\pi$ in the following way: we only change the images of vertices in $T\cup B$. Namely, $\pi'(v) = \pi(v)$ for $v \in V(G') \setminus (T \cup B)$, and for each $e = uv \in E(G)$,
    \begin{align*}
        \pi'(t_e^1) = \pi(t_e^1) + \theta \cdot \overrightarrow{\pi(u)O},\\
        \pi'(b_e^1) = \pi(b_e^1) + \theta \cdot \overrightarrow{\pi(u)O},\\
        \pi'(t_e^2) = \pi(t_e^2) + \theta \cdot \overrightarrow{\pi(v)O},\\
        \pi'(b_e^2) = \pi(b_e^2) + \theta \cdot \overrightarrow{\pi(v)O},
    \end{align*}
 where $\theta > 0$ is a small value to be defined later.
    We show that the embedding $\pi'$ is indeed a unit ball representation of the complement of $G'$. 

    \begin{claim}
        For every $u, v \in V(G')$, $uv \in E(G')$ if and only if $||\pi'(u) - \pi'(v)|| > D$, for some $D > 2$.
        \label{claim:piprime}
    \end{claim}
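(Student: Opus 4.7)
To prove Claim~\ref{claim:piprime}, the plan is to analyze the pairwise distances under $\pi'$ case by case, partitioning the vertex pairs of $V(G')$ into two categories. The first category consists of all pairs not of the form $(w_x, v)$ with $w_x \in W$ and $v \in T \cup B$; by Claim~\ref{claim:pi}, such pairs are at distance either $\ge 2 + \xi$ (edges of $G'$) or $\le 2 - \xi$ (non-edges) under the preliminary embedding $\pi$. Since $\pi'$ differs from $\pi$ only by shifts of norm $\theta R_2$ applied to vertices in $T \cup B$, the triangle inequality bounds the change in any pairwise distance by $2\theta R_2$. For $\theta$ sufficiently small compared to $\xi$, every such pair therefore remains correctly classified for any threshold $D$ chosen within the band $(2 - \xi/2,\, 2 + \xi/2)$.

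The delicate case is that of pairs $(w_x, v)$ with $v \in T \cup B$, which are at distance exactly $2$ under $\pi$. Here the orthogonality of the subspaces $Ox_1x_2x_3$ (containing $\pi(T \cup B)$) and $Ox_4x_5$ (containing $\pi(W)$) is crucial: for the pair $(w_x, t_e^1)$ with $e = uv \in E(G)$, the shift $-\theta\pi(u)$ applied to $\pi(t_e^1)$ lies in the second subspace and is orthogonal to $\pi(t_e^1)$, so the Pythagorean theorem yields
\begin{equation*}
    \|\pi'(t_e^1) - \pi(w_x)\|^2 = R_1^2 + \|\theta\pi(u) + \pi(w_x)\|^2 = 4 + R_2^2\bigl(\theta^2 + 2\theta\cos\alpha_{u,x}\bigr),
\end{equation*}
where $\alpha_{u,x}$ is the angular distance between $\pi(u)$ and $\pi(w_x)$ on the $W$-circle. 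When $x = u$, corresponding to the edge $w_u t_e^1 \in E(G')$, one has $\alpha_{u,u} = 0$ and the squared distance equals $4 + R_2^2(2\theta + \theta^2)$; when $x \ne u$ (a non-edge of $G'$), $\alpha_{u,x} \ge \delta/n$, making the squared distance smaller by at least $2\theta R_2^2\bigl(1 - \cos(\delta/n)\bigr)$. By symmetry, analogous bounds hold for the pairs $(w_x, t_e^2)$, $(w_x, b_e^1)$, $(w_x, b_e^2)$, with $u$ and $v$ swapped according to the shift directions.

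It then remains to pick $D$ so that $D^2$ lies strictly between the largest non-edge squared distance and the smallest edge squared distance. For the boundary pairs this requires $D^2 \in \bigl(4 + R_2^2(\theta^2 + 2\theta\cos(\delta/n)),\, 4 + R_2^2(\theta^2 + 2\theta)\bigr)$, while for the remaining pairs it suffices that $D$ lie in the band $(2 - \xi/2,\, 2 + \xi/2)$. The main obstacle is reconciling the two scales: the gap available for the boundary pairs shrinks as $\Theta(\theta(\delta/n)^2)$, while $\xi$ must still be small enough that the $O(\theta)$ perturbation of the slack pairs does not push any distance across $D$. Since Claim~\ref{claim:pi} leaves $\xi$ free as long as it is sufficiently small (constrained by a polynomial expression in $\delta$ and $m$), I would first fix $\xi$ as a suitable inverse polynomial in $n$ and then take $\theta < \xi/C$ for an appropriate constant $C$, which guarantees a simultaneously valid choice of $D$; all parameters remain of inverse-polynomial scale, so the resulting representation has polynomial bit-length as the theorem requires.
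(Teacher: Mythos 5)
Your proposal is correct and follows essentially the same route as the paper: the degenerate pairs $(w_x,v)$ with $v\in T\cup B$ (at distance exactly $2$ under $\pi$) are separated via the radial shift and the law of cosines, giving a threshold $D$ strictly between $4+R_2^2(\theta^2+2\theta\cos(\delta/n))$ and $4+R_2^2(\theta^2+2\theta)$ in squared distance, while all other pairs retain their $\pm\xi$ slack from Claim~\ref{claim:pi} because the perturbation moves each point by only $\theta R_2$. The ``tension between scales'' you flag is not a real obstacle — one only needs $\theta$ small relative to $\xi$ (the boundary gap may be tiny, as only polynomial precision is required) — and your chosen order of fixing parameters resolves it exactly as the paper does.
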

    \begin{proof}
        First, we consider distances between the pairs $\pi'(w)$, $\pi'(v)$, where $w \in W$, $v \in T \cup B$.
        Let $v \in T \cup B$ and let $w$ be the unique vertex in $W$ such that $\pi'(v) = \pi(v) + \theta \cdot \overrightarrow{\pi(w)O}$. The respective squared distance is then $||\pi'(v) - \pi'(w)||^2 = R_1^2 + (R_2 + \theta)^2$.
        On the other hand, consider a vertex $w' \in W$ with $w' \ne w$.
        For $v$ and $w'$, the squared distance is $||\pi'(v) - \pi'(w')||^2 \le R_1^2 + R_2^2 + \theta^2 + 2 R_2\theta \cos(\delta/n)$, since the distance is independent in $Ox_1x_2x_3$ and $Ox_4x_5$, and in the latter plane the vertex $w'$ is at least at angle of $\delta / n$ away from the line $\overrightarrow{O\pi(w)}$, which by the law of cosines gives the upper bound above. 

        By setting $D = \sqrt{R_1^2 + R_2^2 + \theta^2 + 2 R_2\theta \cos(\delta/n)}$ we therefore achieve that $||w' - v|| \le D$ for each $w' \ne w$, while $||w - v|| > D$ since 
        $$(R_1^2 + (R_2 + \theta)^2) - (R_1^2 + R_2^2 + \theta^2 + 2 R_2\theta \cos(\delta/n)) = 2 R_2 \theta \cdot (1 - \cos(\delta / n)) > 0.$$
        
        Observe that $2 < D$ since $R_1^2 + R_2^2 = 4$, and that $D < 2 + \xi/2$ for a sufficiently small value of $\theta$; fix $\theta$ so that the latter holds.
        We now verify the distance condition for the remaining pairs.
        First, the distance between vertices of $W$ and $c_1$/$c_2$ is the same under $\pi$ and $\pi'$, so it is at most $2 < D$.
        It remains to consider distances between pairs of vertices in $X$, where $X = T \cup B \cup C$. For each $v, u \in X$, 
        $$||\pi'(u) - \pi'(v)||^2 - ||\pi(u) - \pi(v)||^2 \le 2 \theta^2 R_2^2$$
        since the change from $\pi$  to $\pi'$ shifts each vertex by at most the vector of $\theta \cdot \overrightarrow{\pi(w)O}$ for some $w \in W$; the length of this vector is $\theta R_2$, and $\pi$ acts only into the subspace $Ox_1x_2x_3$ on $X$. Clearly, for sufficiently small $\theta$ we get that
$$-\xi/2 < ||\pi'(u) - \pi'(v)|| - ||\pi(u) - \pi(v)|| < \xi/2.$$
        Therefore, all distances that were at most $2 - \xi$ (at least $2 + \xi$, respectively) under $\pi$ from Claim~\ref{claim:pi} remain at most $2 - \xi / 2$ (at least $2 + \xi / 2$, respectively) under $\pi'$. Since $2 - \xi/2 < 2 < D < 2 + \xi/2$, the proof of the claim is concluded.
    \end{proof}

    By Claim~\ref{claim:piprime}, we get the correctness of the presented reduction.
    It remains to observe that the reduction can be done in polynomial time: Only precision polynomial in input size is required for the parameters used for the computation of the coordinates.
    Since in the resulting instance of \probcliqueparition there are $O(n + m)$ vertices, and \textsc{$3$-Coloring} does not admit a $2^{o(n + m)}$-time algorithm under the \classETH, the statement of the theorem follows.
\end{proof}


\bibliography{ref}

\end{document}